\def\negr#1{\mbox{\boldmath$#1$}}
\newcommand{\matr}[1]{\mathbf{#1}}
\newcommand{\vect}[1]{\mathbf{#1}}
\newcommand{\el}{ {\cal{L}}}
\newcommand{\E}{ {\rm{E}}}
\newcommand{\I}{ {\rm{I}}}
\newcommand{\pzk} {p_{k}(\mathbf{s})}
\newcommand{\allf} {{\vect{f}}}
\newcommand{\yrep}{ \mathbf{y}^{R}  }
\newcommand{\zrep}{ \mathbf{z}^{R}  }
\newcommand{\nbasis}{ K  }
\newcommand{\lnbasis}{ \nu  }
\newcommand{\minitab}[2][l]{\begin{tabular}{#1}#2\end{tabular}}
\newcommand{\thickhline}{%
    \noalign {\ifnum 0=`}\fi \hrule height 1pt
    \futurelet \reserved@a \@xhline
}
\begin{document}


\begin{frontmatter}

\title{Switching nonparametric regression models for multi-curve data}
\runtitle{Switching nonparametric regression models for multi-curve data }



\author{\fnms{Camila} \snm{P. E. de Souza,$^{1,2}$}\thanksref{t1}\ead[label=e1]{desouzacpe@gmail.com}}
\author{\fnms{Nancy} \snm{E. Heckman$^3$}\thanksref{t1}\ead[label=e2]{nancy@stat.ubc.ca}}

\and
\author{\fnms{Fan} \snm{Xu$^4$}\thanksref{t1}\ead[label=e3]{xufan9999@gmail.com}}
\address{$^1$Department of Pathology and Laboratory Medicine, University of British Columbia, Vancouver, Canada \\
$^2$Department of Molecular Oncology, BC Cancer Agency, Vancouver, Canada\\
$^3$Department of Statistics, Univerisity of British Columbia, Vancouver, Canada \\
$^4$Department of Industrial Engineering and Operations Research, Columbia University, New York, United States \\
\printead{e1}\\
}

\thankstext{t1}{Research supported by the National Science and Engineering Research Council of Canada.}

\runauthor{De Souza et al.}

\begin{abstract}
We develop and apply an approach for analyzing multi-curve data where each curve is driven by a latent state process. The state at any particular point determines a smooth function, forcing the individual curve to ``switch" from one function to another.  Thus each curve follows what we call a switching nonparametric regression model.  We develop an EM algorithm to estimate the model parameters. We also obtain standard errors for the parameter estimates of the state process. We consider three types of hidden states, those that are independent and identically distributed, those that follow a Markov structure and those that are independent but with distribution depending on some covariate(s).  A simulation study shows the frequentist properties of our estimates. We apply our methods to a building's power usage data.
\end{abstract}



\begin{keyword}
\kwd{ EM algorithm, functional data analysis, latent variables, machine learning, nonparametric regression, power usage, switching nonparametric regression model}
\end{keyword}

\end{frontmatter}


\section{Introduction} \label{sec:motivation}

We develop and apply a method for analyzing multi-curve data where each curve follows a {\em{switching nonparametric regression model }}  \citep{deSouza2014}. That is, each curve,  over its domain, switches among $J$ unobserved states with each state determining a function.  The main goal is to estimate the function corresponding to each state and the parameters of the latent process, along with some measure of accuracy.

We are motivated by the problem of calculating a building's ``typical curve" of energy consumption, that is, its expected energy consumption as a function of time and other variables (e.g., weather conditions).   Such knowledge  allows building managers to compare the building's real-time performance to its ``typical" performance which is useful, for instance,   for assessing  the impact of improvements on a building's energy efficiency.   The data set we analyze was provided by PulseEnergy, now part of EnerNOC (www.enernoc.com).

To understand our methodological approach, compare the plots in Figure \ref{building_figures}.
Figure \ref{building_one} shows hourly power usage during the months of June and July 2009 in an office building. On some days (holidays and weekends) energy usage is close to zero. We observe that on some business days the energy usage is very high, approximately twice as much as on the other days.  This high power consumption occurs on warm days, when the cooling system (also called the chiller) of the building was probably on. Figure \ref{building_replicates} presents the building daytime power usage from 9am to 4pm for 44 business days in June and July 2009. Several types of curves can be observed: one type corresponds to days when the cooling system was probably on and another type when the cooling system was off. We also observe that on some days the chiller turned on in the middle of the day. On one day the chiller went on, off and then on again.

\citet{brown2012kernel} consider the data in Figure \ref{building_one} using a very computer intensive method.  They find the ``typical curve" by applying a local constant kernel smoother over an extremely  large number of data points, and thus, their contribution to the analysis is mainly on improving computational efficiency.  They do not consider the special structure we see in Figure \ref{building_replicates}.  One shortfall of their smoothing method is that  they do not model 
the abrupt changes in level of energy consumption, and thus their approach may oversmooth these changes.  Since these changes are real features of the data, they should be modelled explicitly to better understand power usage. Our method exploits the  structure of Figure \ref{building_replicates} and differs from the approach proposed by \citet{brown2012kernel}  in two important ways: by treating each business day as a replicate; and by modelling abrupt changes in the building's power usage as arising from two functions, one function giving power usage when the chiller is off, the other function giving power usage when the chiller is on. The condition ``chiller on''/``off'' at any particular time is not recorded by the automatic monitoring system.  Thus, it can only be inferred from the data, and so the state of the chiller forms a latent process.

\citet{deSouza2014} present the case where there is a single realization, a single curve switching among $J$ functions. In that paper, we consider two models for the latent process: one where the states are independent and identically distributed, the other where the sequence of states forms a Markov chain. In addition to estimating all parameters and functions, we derive standard errors for the parameters of the latent process. In the present paper, we extend our 2014 approach into the realm of functional data analysis \citep{ramsay2005}:  we consider the case when there are $N$  curves, called replicates, with each replicate switching among $J$ functions. This is the first work to consider the mixture of multiple functions in functional data analysis. 
We also consider a third type of latent state process, where the state depends on a time-varying covariate. In our application, the covariate is temperature recorded at a weather station several kilometres from the building. Preliminary data analysis indicates this dependence can be modelled via logistic regression.

Several authors have considered the single realization case from a Bayesian perspective with the smooth functions modeled as realizations of Gaussian processes.   See, for instance, 
\citet{tresp}, \citet{rasmussen2002} and \citet{ou2008}. The paper of \citet{ou2008} also contains a Bayesian analysis of the replicate case. These papers are discussed in more detail in \citet{deSouza2014} and contain methodology that can, in principle, lead to estimation of all $J$ functions and the latent variable process parameters.  However, unlike our work, the focus is on the estimation of just one function - the mixture, that is, a weighted average of the $J$ functions.  

In a more recent related work, \citet{langrock2014markov} consider generalized additive models with a time component, where the predictor is subject to regime changes controlled by an underlying Markov process. The parameter estimates are obtained by a numerical maximum penalized likelihood approach.  The authors focus on a single realization case and do not consider the replicate case.

This paper is organized as follows. In Section \ref{overview} we provide an overview of the proposed methodology. The solution to the estimation problem is described in Section \ref{sec:parameter_estimation}. Some of the calculations are similar to those that appear in \citet{deSouza2014}; these calculations are given in the Supplementary Material. In Section \ref{sec:sim_studies} we present the results of a simulation study. An application of the proposed methodology to a building's power usage data is presented in Section \ref{data_analysis}. Some discussion is provided in Section \ref{sec:discussion}. 

The computing code and the data are available as supplementary material for possible use by interested readers.

\section{Overview of the proposed methodology}\label{overview}

We consider a data set with $N$ replicates where replicate $k$ contains $n$ observations $y_{1k},\ldots,y_{nk}$ and evaluation points $x_{1},\ldots,x_{n}$, which for simplicity are the same across replicates. Observation $y_{ik}$ depends on $x_i$ according to a hidden (unobserved) state $z_{ik}$ with   possible state values  in $\{1,\ldots,J\}$. If $z_{ik}=j$ the expected response of $y_{ik}$ is $f_j(x_i)$. In this work, we assume the replicates are all generated from just one set of functions $f_1,\ldots,f_J$, a reasonable assumption for the power usage data presented in Figure \ref{building_replicates} and described in Section \ref{sec:motivation}. We consider three types of hidden states, those that are independent and identically distributed, those that follow a Markov structure and those that are independent but with distribution depending on some covariate(s). 

In principal, the $x$s can differ in value and number across replicates.  To proceed, we need only to modify our notation and calculations, since we will model each $f_j$ as a linear combination of B-spline basis functions.  However, in our Markov state process model, a conceptual challenge arises in interpreting transition probabilities when the $x$s vary from replicate to replicate. 

Our notation is as follows.
\begin{itemize}
\item Observed data: $\mathbf{x}=(x_1,\ldots,x_n)^T$, fixed across replicates; 
covariate vectors $\vect{v}_{1k}, \ldots, \vect{v}_{nk}$;  responses  $\yrep=(\vect{y}_1^T,\ldots,$ $\vect{y}_N^T)^T$, where $\vect{y}_k = (y_{1k},\ldots,y_{nk})^T$.
\item Hidden states: $\zrep=(\vect{z}_1^T,\ldots,\vect{z}_N^T)^T$, where  $\vect{z}_k = (z_{1k},\ldots,z_{nk})^T$.
\item $f_j(\mathbf{x})=(f_j(x_1),\ldots, f_j(x_n))^T$ for $j=1,\ldots,J$, and $f_{\vect{z}_k}(\vect{x}) = \big(f_{z_{1k}}(x_1),$ $\ldots,f_{z_{nk}}(x_n)\big)^T$.
\end{itemize}

We assume that $\vect{z}_1,\ldots,\vect{z}_N$, are independent.   Given the hidden states $\vect{z}_k$, $\vect{y}_{k} = f_{\vect{z}_k}
(\vect{x}) + \negr{\epsilon}_{k}$, where $\negr{\epsilon}_1,\ldots,\negr{\epsilon}_N$, are independent and $\negr{\epsilon}_k $ 
has a multivariate normal distribution with mean equal to the 0-vector and covariance matrix $\matr{V}$, possibly depending on  $\vect{z}_k$.  That is,  $\negr{\epsilon}_k \sim MVN(\vect{0},\matr{V})$. 
Therefore, $\vect{y}_1,\ldots, \vect{y}_N$ are independent and, given the hidden states $\vect{z}_k$, $\vect{y}_k \; \sim \; MVN(f_{\vect{z}_{k}}(\vect{x}), \matr{V})$. 
Our model can be considered a functional data model.  In usual functional data modeling, when there is no switching regression, the observations from the $k$th replicate, $y_{1k},\ldots, y_{nk}$, are generated from a single realization of a stochastic process (see, for instance, \citealp{james2000principal}, and \citealp{yao2005functional}).   In our case, for the $k$th replicate, the observations arise from $J$ stochastic process realizations, $f_{1k}, \ldots, f_{Jk}$, one for each possible state. The distribution of the $k$th replicate of the $j$th stochastic process satisfies $E(f_{jk}(x))= f_j(x)$ with the covariance between $f_{jk}(x)$  and $f_{jk}(x^*)$ generating the covariance matrix $\matr{V}$. Thus, $\matr{V}$ induces a dependence among the observations of the $k$th realization. 

We let $\gamma$ be the set containing $f_1(\vect{x}), \ldots, f_J(\vect{x})$ and the parameters in $\matr{V}$.  We assume that the distribution of each $\vect{z}_k$ is governed by a parameter vector $\alpha$.  Section \ref{Sec:choices_alpha_V} presents our different choices of $\matr{V}$ and $\alpha$. 

Our goal is to estimate $\theta \equiv \{\alpha,\gamma\}$, along with standard errors or some measure of accuracy for the parameters in $\alpha$. 
Similar to \citet{deSouza2014} we obtain the parameter estimates by maximizing
\begin{equation}
l(\theta) \equiv \sum_{k=1}^N \log p(\mathbf{y}_k|\theta)  + P(f_1,\ldots,f_J,\lambda_1,\ldots,\lambda_J),
\label{criterion1}
\end{equation}

\noindent where 
$p(\vect{y}_k|\theta)$ is the likelihood function based on the observed data from the $k$th replicate and $P$ is a roughness penalty on the $f_j$s. The exact form of $P(f_1,\ldots,f_J,\lambda_1,\ldots,\lambda_J)$ is chosen by the user. For our work, we set 
\begin{equation}
P(f_1,\ldots,f_J,\lambda_1,\ldots,\lambda_J) = - \sum_{j=1}^J \lambda_j \int [f_j''(x)]^2dx, 
\nonumber
\label{P:PL}
\end{equation}

\noindent since the integrated squared second derivative of a function is a common form of roughness penalty \citep{wahba1990spline}. The $\lambda_j$s are the smoothing parameters, governing the weight of the penalty term.  As in  \citet{deSouza2014} one could also take a Bayesian approach by maximizing (\ref{criterion1}) with $P$ arising from placing a Gaussian process prior on the $f_j$s. 

The form of $\log p(\mathbf{y}_k|\theta)$ is very complicated, since it involves the distribution of the latent states $\vect{z}_{k}$. Therefore, we apply an Expectation-Maximization (EM) algorithm \citep{dempster1977em} to maximize (\ref{criterion1}). We can show (see, for instance, \citealp{cappe2005} and \citealp{mclachlan}) that our EM algorithm generates a sequence of estimates, $\theta^{(c)}$, $c \geq 1$, satisfying  $l(\theta^{(c+1)}) \geq l(\theta^{(c)})$. One could also perform a numerical likelihood maximization as described in \citet{macdonald2014numerical} and \citet{zucchini2016hidden}.

As in the single realization case presented in \citet{deSouza2014} we use again the results of \citet{louis1982} to obtain standard errors for the estimates of the parameters of the latent state process. When the hidden states, $z_{1k},\ldots,z_{nk}$, are independent and identically distributed (\textit{iid}) we consider $J\geq2$ possible state values. For $z_{1k},\ldots,z_{nk}$ following a Markov structure we restrict the possible number of states to $J=2$. We also obtain standard errors for the intercept and slope parameters for the case where $J=2$ and $z_{1k},\ldots,z_{nk}$ are independent with the distribution of $z_{ik}$ depending on only one covariate. See Section 2 of the Supplementary Material for more details. 

\subsection{Choices of $\matr{V}$ and $\alpha$}  \label{Sec:choices_alpha_V}
We consider five models for the covariance of the residual error, $\matr{V}$: unrestricted, diagonal with either $\matr{V}= \sigma^2 \matr{I}$
or with entry $(i,i)$ depending on the latent state, and two generated from a ``random intercept" covariance structure: a {\em{homogeneous random intercept model}} and a {\em{non-homogeneous random intercept model}} with variability of the intercept depending on the value of the latent state. We usually use $\matr{V}_{\vect{z}}$ to denote models where the variability depends on the latent state. However, sometimes we omit the subscript $\vect{z}$ when referring to a general $\matr{V}$. 
The unknown parameters in $\matr{V}$ are clear for our first two models.  For the third model, the parameters in $\matr{V}_{\vect{z}}$ are $\sigma_1^2, \ldots, \sigma_J^2$. 

For $\matr{V}$ to follow a homogeneous random intercept model, let  $y_{ik} = f_{z_{ik}}(x_i) + \epsilon_{ik}$.  Then suppose that $\epsilon_{ik}=\delta_k + e_{ik}$, where $\delta_k$
and $e_{ik}$ are independent for all $i=1,\ldots,n$ and $k=1,\ldots,N$, and the $\delta_k $s are \textit{iid} $N(0,\tau^2)$ and the $e_{ik}$s are \textit{iid} $N(0,\sigma^2)$. Then $\matr{V}$ will depend on only two parameters and can be written as 
\begin{equation}
\matr{V} = \sigma^2 \big(\matr{I} + d \vect{1}\vect{1}^T \big),
\label{restricted_V}
\end{equation}
\noindent where $\matr{I}$ is an $n\times n$ identity matrix, $\vect{1}$ is an $n$-vector of ones and $d = \tau^2/\sigma^2$. 

Our data analysis (Section \ref{data_analysis}) requires the more complex covariance structure of a non-homogenous random intercept model, where the variance of the random intercept depends on the state.  We define this model for the simple case,  where there are $J=2$ states.  We assume that $y_{ik} = f_{z_{ik}}(x_i) + \epsilon_{z_{ik},\,ik}$, where $\epsilon_{1,ik}=\delta_k + e_{ik}$ when $z_{ik} =1 $ and $\epsilon_{2,ik}=\delta_k + \vartheta_k + e_{ik}$ when $z_{ik} =2$. In addition, $\delta_k$, $\vartheta_k$, and $e_{ik}$ are independent for $i=1,\ldots,n$, and $k=1,\ldots,N$, with $\delta_k $s  \textit{iid} $N(0,\tau_1^2)$,  $\vartheta_k $s \textit{iid} $N(0,\tau_2^2)$ and $e_{ik}$s \textit{iid} $N(0,\sigma^2)$. Therefore, the covariance matrix for the non-homogeneous random intercept model is given by
\begin{equation}
\matr{V}_{\vect{z}_k} = \sigma^2 ( \matr{I} + d_1\matr{1}\matr{1}^T + d_2\matr{1}_{\vect{z}_k}\matr{1}_{\vect{z}_k}^T),
\label{covEq:RI:diff}
\end{equation}

\noindent where  $d_j= \tau_j^2/\sigma^2$ and $\matr{1}_{\vect{z}_k}$ is an $n$-vector with $i$th entry $\I(z_{ik}=2)$. 

In our model $\alpha$ is the vector containing the parameters governing the distribution of the hidden states. If $z_{1k},\ldots,z_{nk}$ are \textit{iid}, then  $\alpha$ is of length $J$ with $j$th component equal to $p(z_{ik}=j | \alpha) \equiv p_j$. If $z_{1k},\ldots,z_{nk}$ follow a Markov structure, that is, if
$ p(z_{ik} | z_{(i-1)k},\ldots,z_{1k},\alpha)=p(z_{ik} | z_{(i-1)k},\alpha)$,  $i=2,\ldots n$, then the parameter vector $\alpha$ consists of the initial probabilities, $\pi_j=p(z_{ik}=1|\alpha)$,  and the transition probabilities, 
$  a_{lj} =  
p(z_{ik}=j|z_{(i-1)k}=l,\alpha) $, $j, l=1,\ldots,J$.
Note that 
 the transition probabilities do not depend on $i$ or $k$.

In the case where $z_{1k},\ldots,z_{nk}$ are independent, with the distribution of $z_{ik}$ depending on a vector of covariates $\vect{v}_{ik} = (1,v_{1,ik},v_{2,ik}, \ldots,v_{M,ik})^T$, we assume that $p(z_{ik} = j | \vect{v}_{ik}, \alpha) \equiv p_j(\vect{v}_{ik}, \alpha)$ follows a multinomial logistic regression model with 
$$\log \frac{ p_j(\vect{v}_{ik}, \alpha)} { p_1(\vect{v}_{ik}, \alpha)} = \beta_{j0} + \beta_{j1}v_{1,ik} + \cdots +  \beta_{jM}v_{M,ik} = \negr{\beta}_j^T \vect{v}_{ik} \; \mbox{for} \; j=2,\ldots,J $$

\noindent so that 
$$  p_1(\vect{v}_{ik},\alpha) = \frac{1}{1+ \sum_{j=2}^{J}e^{ \negr{\beta}_j ^T\vect{v}_{ik} }} $$
\noindent and
$$ \; p_j(\vect{v}_{ik},\alpha) = \frac{e^{ \negr{\beta}_j^T \vect{v}_{ik} }}{1+ \sum_{j=2}^{J}e^{ \negr{\beta}_j^T \vect{v}_{ik} }} \; \mbox{for} \; j=2,\ldots,J.$$ 
In this case $\alpha$ contains all the regression coefficient vectors  $\negr{\beta}_2, \ldots, \negr{\beta}_{J}$.

\section{Parameter estimation}{\label{sec:parameter_estimation}}
Here we present the proposed EM algorithm to obtain the estimates of the parameters in $\theta$. In the M-step, we take the same approach as \citet{deSouza2014}  and model each $f_j$ as a linear combination of $K$ known cubic B-spline basis functions, so that $f_j(\vect{x})=\matr{B}\phi_j$, where $\phi_j$ is the $\nbasis$-vector of coefficients corresponding to $f_j$ and $\matr{B}$ is the  $n\times \nbasis$ matrix with entries $B_{i\lnbasis}=b_\lnbasis(x_i)$. 

The smoothing parameters, $\lambda_1,\ldots,\lambda_J$, can be chosen by a data driven method or subjectively by visual inspection. In Section \ref{choice:lambda:rep}, we propose and justify a leave-one-curve-out cross-validation criterion to find the optimal $\lambda_j$s for the case  when $\matr{V}$ is diagonal and use this method in our application. In our application, when $\matr{V}$ is based on the nonhomogeneous random intercept model, we choose the smoothing parameters via a ``brute force" leave-one-curve-out method, assuming that $\lambda_1=\lambda_2=\lambda$. We use a weighted cross-validation criterion where the weights reflect the uncertainty of the hidden states (see Section 3 of the Supplementary Material).  In all of our simulation studies, to reduce computation time, we pre-choose the $\lambda_j$s by examining a few data sets and visually ensuring that the estimated functions have the same smoothness and shape as the true curves.

Let $p(\yrep,\zrep | \theta) $ be the joint distribution of the observed and latent data given $\theta$, also called the complete data distribution. The application of the EM algorithm to the replicate case is  similar to that of the one realization case considered in \citet{deSouza2014}, which is based on writing 
\begin{equation}
\log p(\yrep,\zrep | \theta) =
\log p(\yrep|\zrep,\theta)  +  \log p(\zrep|\theta)
\equiv
\el_{1}(\gamma)   + 
\el_{2}(\alpha) .
\nonumber
\end{equation}
In what follows we present a summary of the E and M steps. See Section 1 of the Supplementary Material for details.

In the E-step we calculate 
\begin{equation}
Q(\theta,\theta^{(c)}) 
\equiv  \E_{\theta^{(c)}} \big( \log p(\yrep,\zrep |\theta)|\yrep \big)
 =  \E_{\theta^{(c)}}(\el_1(\gamma)|\yrep) + \E_{\theta^{(c)}}(\el_2(\alpha)|\yrep). \nonumber
\label{eq:Q}
\end{equation}

In the M-step, we want to find $\theta^{(c+1)}$ that maximizes $S(\theta,\theta^{(c)}) \equiv Q(\theta,\theta^{(c)}) + P(f_1,\ldots,f_J,\lambda_1,\ldots,\lambda_J)$ with respect to $\theta$, or at least satisfies  $S(\theta^{(c+1)},\theta^{(c)})$ $ \ge S(\theta^{(c)},\theta^{(c)})$.
Let $\vect{s}$ be an $n$-vector of possible hidden states, i.e., each entry of $\vect{s}$ is in $\{1, 2, \ldots,J\}$, and let
\[
\pzk^{(c)}\equiv p(\vect{z}_k=\vect{s}|\yrep,\theta^{(c)}) 
=  p(\vect{z}_k=\vect{s}|\vect{y}_k,\theta^{(c)}),\]

\noindent whose value depends on the model assumed for the hidden states. Therefore, disregarding the constant terms, we maximize
\begin{eqnarray}
 &&S^* (\theta,\theta^{(c)})  \equiv
\nonumber \\
&  - &  
 \displaystyle \frac{1}{2} \sum_{k=1}^{N}  \sum_{\text{all} \, \vect{s}}  \pzk^{(c)}
\left[
(\vect{y}_k - f_{\vect{s}}(\vect{x}) )^T\matr{V}_{\vect{s}}^{-1} (\vect{y}_k - f_{\vect{s}}(\vect{x})) 
+ \log |\matr{V}_{\vect{s}} | \right ] \label{S:rep:1} 
\\
&  +  &~  P(f_1,\ldots,f_J,\lambda_1,\ldots,\lambda_J) \label{S:rep:2}
\\
& + &~   \sum_{k=1}^N  \sum_{\text{all} \; \vect{s}}  \pzk^{(c)}  \log  p(\vect{z}_k = \vect{s} | \alpha)
 \label{S:rep:3}
\end{eqnarray}

\noindent with respect to $\theta$ $=\{\alpha, f_1(\vect{x}),\ldots,f_J(\vect{x})$, and the parameters in $\matr{V} \}$. Note that $\theta^{(c)}$ is fixed and
thus so are the $\pzk^{(c)}$s. We also consider the smoothing parameters, $\lambda_1,\ldots, \lambda_J$, to be fixed. We apply a natural extension of the EM approach, the Expectation-Conditional Maximization (ECM) algorithm \citep{meng}, to obtain the parameter updates $\theta^{(c+1)}$.

\subsection{M-step via an ECM algorithm}\label{Mstep:rep:iid}
The steps of the ECM algorithm are summarized as follows.
\begin{enumerate}
\item Hold $\matr{V}$ and the parameters in $\alpha$ fixed and maximize $S^*$ with respect to $f_1(\vect{x}),\ldots,f_J(\vect{x})$, obtaining $f_1(\vect{x})^{(c+1)},\ldots,f_J(\vect{x})^{(c+1)}$. That is, maximize the sum of (\ref{S:rep:1}) and (\ref{S:rep:2}).
\item Hold $f_1(\vect{x}),\ldots,f_J(\vect{x})$ and the parameters in $\alpha$ fixed and maximize (\ref{S:rep:1}) with respect to the parameters in $\matr{V}$, obtaining $\matr{V}^{(c+1)}$.
\item Hold $f_1(\vect{x}),\ldots,f_J(\vect{x})$ and $\matr{V}$ fixed and maximize (\ref{S:rep:3}) with respect to the parameters in $\alpha$, obtaining $\alpha^{(c+1)}$.
\end{enumerate}

The results for Steps 1, 2 and 3 are given below. Details can be found in Section 1.2 of the Supplementary Material.

\vspace{.3cm}
\noindent Step 1. \textit{Updating $f_1(\vect{x}),\ldots,f_J(\vect{x})$}.

We propose a method to update the $f_j(\vect{x})$s that is straightforward and yields an estimate of $\allf= (f_1(\vect{x})^T,\ldots,f_J(\vect{x})^T)^T$ in closed form.  The trick is to write $f_{\vect{s}}(\vect{x})$  in terms of $f_1(\vect{x}),\ldots,f_J(\vect{x})$. 
To do this, let  $\vect{1}_{j,\vect{s}}$ be the $n$-vector with $i$th element  equal to 1 if $s_i=j$,  0 else.   Let ${\mathcal{I}}_{\vect{s}}$ be the $n$ by $nJ$ matrix, ${\mathcal{I}}_{\vect{s}}= [ $ diag$(\vect{1}_{1,\vect{s}})  ~ |  ~  \cdots   ~ |  ~  $diag$(\vect{1}_{J,\vect{s}})]$.  Then we easily see that  $f_{\vect{s}}(\vect{x}) = {\mathcal{I}}_{\vect{s}} \allf$. Recall that  $f_j(\vect{x}) = \matr{B} \phi_j$. 
 Let $\matr{B}^{*}$ be the $nJ\times \nbasis J$ block diagonal matrix with each block equal to $\matr{B}$ and  let $\phi$ be the $JK$-vector $\phi=(\phi_1^T,\ldots,\phi_J^T)^T$.  Therefore $\allf = \matr{B}^{*}\phi$.
Let $\matr{R}$ be the  $\nbasis \times \nbasis$ matrix with entries 
$
\matr{R}_{\lnbasis \lnbasis'}= \int b_\lnbasis''(x)b''_{\lnbasis'}(x)\,dx.
$
Combining these calculations we see that, to find the $f_j$s that maximize the sum of (\ref{S:rep:1}) and (\ref{S:rep:2}), we must maximize, as a function of $\phi$,
$$\displaystyle -\frac{1}{2}  \sum_{k=1}^N   \sum_{\text{all} \, \vect{s}}  \pzk^{(c)}
\left[
(\vect{y}_k - {\mathcal{I}}_{\vect{s}} \matr{B}^{*} \phi )^T\matr{V}_{\vect{s}}^{-1} (\vect{y}_k - {\mathcal{I}}_{\vect{s}} \matr{B}^{*} \phi )\right] 
  -    \phi^T {\rm{diag}}( \lambda_1 \matr{R}, \ldots, \lambda_J \matr{R}) \phi. \nonumber $$

This expression is quadratic in $\phi$ and is easily maximized in closed form.  Let $\phi^{(c+1)}$ be this maximizing $\phi$ when we set $\matr{V}=\matr{V}^{(c)}$. So we let $\allf^{(c+1)} = \matr{B}^{*}\phi^{(c+1)}$.

\vspace{.3cm}
\noindent Step 2. \textit{Updating $\matr{V}$.}

For a model with $\matr{V}_{\vect{s}} \equiv \matr{V}$, with no dependence on the state vector $\vect{s}$ and no restrictions on the form of $\matr{V}$,  we show that $\matr{V}^{(c+1)}$ is 
\begin{equation}
\widehat{\matr{V}} = \frac{1}{N} \sum_{k=1}^{N} \sum_{\text{all} \, \vect{s}}  \pzk^{(c)}\big(\vect{y}_k - f_{\vect{s}}(\vect{x}) \big) \big(\vect{y}_k - f_{\vect{s}}(\vect{x})\big)^T
\nonumber
\label{eq:Vhat}
\end{equation}

\noindent with $f_{\vect{s}}(\vect{x})=f_{\vect{s}}(\vect{x})^{(c+1)}$. Note that if the values of $\vect{z}_k$ were non-random and known, then $\pzk^{(c)}$ is a delta function and so 
$\widehat{\matr{V}}$ is similar to the sample covariance matrix of the $\vect{y}_k$s.

When  $\matr{V}_{\vect{s}} \equiv \matr{V}$ follows a homogeneous random intercept model we update the parameter estimates of the restricted $\matr{V}$ in (\ref{restricted_V}) as follows.
Let $\sigma^{2\,(c+1)}$ be 
\begin{eqnarray}
\hat{\sigma}^2 &=& \frac{1}{N(n - 1)}\left(  \sum_{k=1}^{N} \sum_{\text{all} \, \vect{s}}  \pzk^{(c)} (\vect{y}_k - f_{\vect{s}}(\vect{x}) )^T (\vect{y}_k - f_{\vect{s}}(\vect{x})) \right. 
\nonumber \\
&& \left.- \, \frac{1}{n} \sum_{k=1}^{N} \sum_{\text{all} \, \vect{s}}  \pzk^{(c)}
\big[\big(\vect{y}_k - f_{\vect{s}}(\vect{x})\big)^T\vect{1}\big]^2  \right), \nonumber
\label{eq:sigmahat.RI}
\end{eqnarray}

\noindent and $d^{(c+1)}$ be 
\begin{equation}
\hat{d} =  \frac{1}{\sigma^2 N n^2}\sum_{k=1}^{N} \sum_{\text{all} \, \vect{s}}  \pzk^{(c)}
\big[\big(\vect{y}_k - f_{\vect{s}}(\vect{x})\big)^T\vect{1}\big]^2   - \frac{1}{n} \nonumber
\label{dhat}
\end{equation}
\noindent with $\sigma^2$ replaced by $\sigma^{2\,(c+1)}$.  Therefore  $\tau^{2\, (c+1)}= d^{(c+1)}\times \sigma^{2\,(c+1)}$.  
\vspace{.2cm}

The maximization in Step 2 when $\matr{V}_{\vect{s}}$ follows the non-homogeneous random intercept model is given in Section 1.2 of the Supplementary Material, for the case of $J=2$ states. The ECM algorithm for diagonal $\matr{V}_{\vect{s}}$  is given in Section \ref{V:restricted:1}. 

\vspace{.3cm}
\noindent Step 3. \textit{Updating $\alpha$ (any $\matr{V}$)}.  

We maximize  (\ref{S:rep:3}) with respect to the parameters in $\alpha$, with the calculations depending on the proposed model for the hidden states. 

When $z_{1k},\ldots,z_{nk}$ are \textit{iid} with $p_j=p(z_{ik}=j | \alpha)$ we obtain
\begin{equation}
p_j^{(c+1)}= \frac{1}{Nn} \displaystyle \sum_{k=1}^N \sum_{\text{all} \, \vect{s}} \pzk^{(c)} n_{\matr{s},j}.
\nonumber
\end{equation}

For Markov $z_{ik}$s, where the vector $\alpha$ is composed of transition probabilities $a_{lj}$ and initial probabilities $\pi_j$, we obtain
\begin{equation}
\pi_j^{(c+1)} =\frac{1}{N}\displaystyle\sum_{k=1}^N \sum_{\text{all} \, \vect{s}} \pzk^{(c)}  \I(s_1=j) \nonumber
\end{equation}
\noindent and
\begin{equation}
a_{lj}^{(c+1)}=  \frac{ \displaystyle\sum_{k=1}^N \sum_{\text{all} \, \vect{s}} \pzk^{(c)} n_{\vect{s},lj}}
{ \displaystyle\sum_{k=1}^N \sum_{\text{all} \, \vect{s}} \pzk^{(c)} \sum_{i=2}^{n} \I(s_{i-1}=l)},\nonumber
\end{equation}

\noindent where $n_{\vect{s},lj}$ is the number of transitions in $\vect{s}$ from state $l$ to  state $j$, that is, $n_{\vect{s},lj} = \sum_{i=2}^n  \I \{ s_{i-1}=l, s_i=j\}$. 

When $z_{1k},\ldots,z_{nk}$  are independent with the distribution of $z_{ik}$ depending on some covariate(s), $\alpha$ contains the regression coefficients from our logistic regression model for $p(z_{ik} = j | \vect{v}_{ik}, \alpha) \equiv p_j(\vect{v}_{ik}, \alpha)$. In this case, (\ref{S:rep:3}) becomes
\begin{equation}
 \sum_{k=1}^N  \sum_{\text{all} \; \vect{s}} 
  \pzk^{(c)}  ~
   \sum_{i=1}^n \sum_{j=1}^J 
 \log   p_j(\vect{v}_{ik},\alpha)  ~ {\rm{I}} \{ s_{i} = j \}, \nonumber 
\end{equation}
which must be maximized numerically, for instance, via a Newton-Raphson method. 

\subsection{ECM algorithm when $\matr{V}$ is diagonal   \label{V:restricted:1} }

Recall that we consider two cases of $\matr{V}$ diagonal, one with  $\matr{V}=\sigma^2 \matr{I} $ and one with  $\matr{V} = \matr{V}_{\vect{z}_k}=\mbox{diag}(\sigma^2_{z_{1k}},\ldots,\sigma^2_{\vect{z}_{nk}})$.  
We could use the notation and steps of Section \ref{Mstep:rep:iid}, modifying Step 2 for these types of $\matr{V}$.  However, it is much easier to re-derive all three steps using the independence of the components of $\vect{y}_k$ in order to rewrite $\el_1(\gamma)$, and thus  $S(\theta,\theta^{(c)})$, in simpler form.  We will see below that, instead of the  $\pzk^{(c)}$s in (\ref{S:rep:1}) and (\ref{S:rep:3}), we require the simpler
$$p_{ik}(j)^{(c)}= p(z_{ik}=j|\vect{y}_{k},\theta^{(c)}).$$
\noindent The forms of $p_{ik}(j)^{(c)}$ are given in Section 1.3 of the Supplementary Material. 

Here, we carry out the calculations of the ECM algorithm for the case that $ \matr{V}_{\vect{z}_k}=\mbox{diag}(\sigma^2_{z_{1k}},\ldots,\sigma^2_{\vect{z}_{nk}})$, as they can be easily modified for the case that $ \matr{V}=\sigma^2 \matr{I} $: simply replace $\sigma^2_j$ by $\sigma^2$.

We want to find $\theta=\{ f_j(\vect{x}), \sigma_j^2, \, j=1,\ldots,J,$ and $\alpha \}$ that maximizes
\begin{eqnarray}
S^*(\theta,\theta^{(c)})  &=&
 - \displaystyle \frac{1}{2} \sum_{k=1}^{N} \sum_{i=1}^n \sum_{j=1}^J p_{1k}(j)^{(c)}  \log \sigma_j^2 
\label{S:I:0}
\\
&&  - \displaystyle \frac{1}{2} \sum_{k=1}^{N} \sum_{j=1}^J
  \big(\vect{y}_k - f_{j}(\vect{x}) \big)^T \matr{W}_{kj} \big(\vect{y}_k - f_{j}(\vect{x})\big)  \label{S:I:1} 
\\
&&  + ~~ P(f_1,\ldots,f_J,\lambda_1,\ldots,\lambda_J) \label{S:I:2} \\
&&  + ~~ \E_{\theta^{(c)}}(\el_2(\alpha)|\yrep),
\label{S:I:3}
\end{eqnarray}

\noindent where
\begin{equation}
\matr{W}_{kj} = \sigma_j^{-2}\mbox{diag}(p_{1k}(j)^{(c)},\ldots, p_{nk}(j)^{(c)} ).
\label{Wkjdef}
\end{equation}

\noindent We apply the ECM algorithm as follows.

\begin{enumerate}
\item \textit{Updating} the $f_j(\vect{x})$s. Holding the $\sigma_j^2$s and the parameters in $\alpha$ fixed and maximizing the sum of (\ref{S:I:1}) and (\ref{S:I:2}) with respect to $f_j(\vect{x})$ we obtain
\begin{equation}
\hat{f}_{j}(\vect{x}) = \sum_{k=1}^N \matr{H}_{kj}(\lambda_j) \vect{y}_k, \nonumber
\label{fhatVsigma2}
\end{equation}
\noindent where 
\begin{equation}
\matr{H}_{kj}(\lambda) = \vect{B}\left( \matr{B}^T \sum_{r=1}^N \matr{W}_{rj}\, \matr{B} + 2\lambda\matr{R} \right)^{-1}\matr{B}^T \matr{W}_{kj}.
\label{Hhat_Vdiag}
\end{equation}

We let $f_j(\vect{x})^{(c+1)}$ be $\hat{f}_j(\vect{x})$ with $\sigma_j^2$ in $\matr{W}_{kj}$ replaced by $\sigma_j^{2(c)}$.

\item \textit{Updating the $\sigma_j^2$s}. Holding the $f_j(\vect{x})$s and $\alpha$ fixed and maximizing the sum of (\ref{S:I:0}) and  (\ref{S:I:1}) with respect to $\sigma_j^2$ we get

$$\hat{\sigma}_j^2=\frac{\displaystyle  \sum_{k=1}^N\sum_{i=1}^n p_{ik}(j)^{(c)}\big[y_{ik}-f_j(x_i)\big]^2}{\displaystyle \sum_{k=1}^N\sum_{i=1}^n p_{ik}(j)^{(c)}}.$$

Let $\sigma_j^{2(c+1)}$ be $\hat{\sigma}_j^2$ with $f_j(x_i)=f_j(x_i)^{(c+1)}$. 

\item \textit{Updating $\alpha$}.  Hold the $f_j(\vect{x})$s and the $\sigma_j^2$s fixed and maximize (\ref{S:I:3})  with respect to the parameters in $\alpha$.  For \textit{iid} $z_{ik}$s  we obtain
$$p_j^{(c+1)}=  \frac{1}{Nn} \displaystyle \sum^{N}_{k=1} \sum_{i=1}^n p_{ik}(j)^{(c)}.$$
\noindent For Markov $z_{ik}$s, we have 
    $$a_{lj}^{(c+1)}=  \frac{\displaystyle \sum_{k=1}^N\sum_{i=2}^n p(z_{(i-1)k}=l, z_{ik}=j|\mathbf{y}_k,\theta^{(c)})}{\displaystyle \sum_{k=1}^N\sum_{i=2}^n p(z_{(i-1)k}=l|\mathbf{y}_k,\theta^{(c)})}$$

    \noindent and $$\pi_j^{(c+1)} = \frac{1}{N}\sum_{k=1}^N p_{1k}(j)^{(c)}.$$
  \noindent For $z_{ik}$s independent with distribution of $z_{ik}$ depending on some covariates we need numerical optimization methods, such as Newton-Raphson, to obtain the coefficient estimates from our logistic regression model for $p(z_{ik} = j | \vect{v}_{ik}, \alpha)$. So, for example, if there are $J=2$ states and the covariate vector is $\vect{v}_{ik} = (1,v_{ik})^T$, we apply a numerical method to obtain $\beta_{20}$ and $\beta_{21}$ that maximize
 $$ 
\E_{\theta^{(c)}}(\el_2(\beta_{20},\beta_{21})|\yrep)  = 
 \sum_{k=1}^N  \sum_{i=1} ^n  \big\{p_{ik}(2)^{(c)} (\beta_{20} + \beta_{21}v_{ik}) - \log(1+e^{\beta_{20} + \beta_{21}v_{ik}}) \big\}.$$
\end{enumerate}

\subsection{Choice of the smoothing parameters when $\matr{V}$ is diagonal}  \label{choice:lambda:rep}

In principal, we can always compute the smoothing parameters by ``leave-one-curve-out" cross-validation.   However, for many models, this can be computationally intensive.  Fortunately, in the models with $\matr{V}=\sigma^2 \matr{I}$ or $ \matr{V}_{\vect{z}_k}=\mbox{diag}(\sigma^2_{z_{1k}},\ldots,$       
$\sigma^2_{z_{nk}})$, we can shorten calculations by using Theorem 1 below. In this section, we describe our iterative cross-validation procedure,
 implemented for our data analysis in Section 
\ref{uncorr:obs} for $ \matr{V}_{\vect{z}_k}=\mbox{diag}(\sigma^2_{z_{1k}},\ldots,$       
$\sigma^2_{z_{nk}})$.  The steps for $\matr{V}=\sigma^2 \matr{I}$
are the same  except with  $\hat{\sigma}^2$ replacing the $\hat{\sigma}_j^2$s . 

In our data analysis we set the initial values, the $\lambda_j^{(0)}$s, to those that worked well when tested on the data set.
We update the $\lambda_j$s  as follows.

\begin{enumerate}
\item  At iteration $i$, with $\lambda_j=\lambda_j^{(i)}$, $j=1,\ldots,J$, use the ECM algorithm of Section \ref{V:restricted:1} to find the $\hat{p}_{ik}(j)$s, the $\hat{\sigma}_j^2$s and the $\hat{f}_j$s.
    \item Discard the $\hat{f}_j$s from Step 1.
\item  
Let $\widehat{\matr{W}}_{kj}$ be  $\matr{W}_{kj}$ as defined in (\ref{Wkjdef}) but with the $\hat{\sigma}_j^2$s and $\hat{p}_{ik}(j)$s replacing the
${\sigma}_j^2$s and ${p}_{ik}(j)$s.   Treat the $\hat{\sigma}_j^2$s and the $\hat{p}_{ik}(j)$s and thus the $\widehat{\matr{W}}_{kj}$s  as fixed. 
\item 
For $j=1,\ldots, J$, over a grid of possible $\lambda$ values, set $\lambda_j^{(i+1)}$ as the value of $\lambda$ that minimizes the following leave-one-replicate-out cross-validation criterion:
\begin{equation}
CV_j(\lambda) = \sum_{k=1}^N \big[ \vect{y}_k -\hat{f}^{(-k)}_{j\,\lambda} (\vect{x})  \big]^T \widehat{ \matr{W}}_{kj} \big[ \vect{y}_k -\hat{f}^{(-k)}_{j\,\lambda} (\vect{x} ) \big]
\label{CVrep}
\end{equation}
\noindent  where $\hat{f}^{(-k)}_{j\,\lambda} $ is the function that maximizes
\begin{equation}
S_j^{(-k)}(f_j) = -\frac{1}{2}\sum_{r=1:r\neq k}^N  \big[ \vect{y}_r -f_j(\vect{x})\big]^T \widehat{\matr{W}}_{rj} \big[ \vect{y}_r -f_j(\vect{x})\big] +
P(f_j,\lambda). \nonumber
\label{S_no_k}
\end{equation}
\item
Repeat steps 1-4 with $\lambda_j = \lambda_j^{(i+1)}$, $j=1,\ldots,J$, until convergence.
\end{enumerate}

\noindent We use the final values of the $\lambda_j$s to obtain all of the parameter estimates from the ECM algorithm as in
Section {\ref{V:restricted:1}}.

Finding $\lambda$ that minimizes (\ref{CVrep}) is computationally intensive. Fortunately, we have the following theorem.

\newtheorem{thm}{Theorem}
\begin{thm}
Let  $\hat{f}_{1\lambda}$,  $\ldots, \hat{f}_{J\lambda}$
be the maximizers of the sum of (\ref{S:I:1}) and (\ref{S:I:2}),  with $\matr{W}_{kj}$ replaced by $\widehat{\matr{W}}_{kj}$.  Let $\widehat{\matr{H}}_{kj}$ be as in  (\ref{Hhat_Vdiag}), but  with $\matr{W}_{kj}$ replaced by $\widehat{\matr{W}}_{kj}$.  
Suppose that $\matr{I} - \widehat{\matr{H}}_{kj}$ is invertible and $\widehat{\matr{W}}_{kj}$ is positive definite, $j=1,\ldots, J$. Then 
\begin{equation}
CV_j(\lambda) = \sum_{k=1}^N \Big[ (\matr{I}  - \widehat{\matr{H}}_{kj}(\lambda))^{-1} ( \hat{f}_{j\,\lambda}(\vect{x}) - \vect{y}_k)   \Big]^T \widehat{\matr{W}}_{kj} \Big[ (\matr{I}  - \widehat{\matr{H}}_{kj}(\lambda))^{-1} ( \hat{f}_{j\,\lambda}(\vect{x}) - \vect{y}_k) \Big]. \nonumber
\end{equation}
\label{thm1}
\end{thm}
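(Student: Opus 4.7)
The plan is to adapt the classical linear-smoother ``leave-one-out'' identity (Craven--Wahba style) to the replicate setting, treating each omitted replicate as a vector observation rather than a scalar point. The key is that by (\ref{fhatVsigma2}) the estimator is a linear combination of the $\vect{y}_r$'s through the smoother matrices $\widehat{\matr{H}}_{rj}(\lambda)$, so substituting a fake value for $\vect{y}_k$ gives a tractable algebraic identity.

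First, for a fixed $k$ and $j$ I would introduce the pseudo-observation $\vect{y}_k^{*} \equiv \hat f^{(-k)}_{j\,\lambda}(\vect{x})$ and consider the modified data set $(\vect{y}_1,\ldots,\vect{y}_{k-1},\vect{y}_k^{*},\vect{y}_{k+1},\ldots,\vect{y}_N)$. The target criterion in (\ref{S_no_k}) evaluated on the full modified data set differs from $S_j^{(-k)}(f_j)$ only by the nonpositive term $-\tfrac{1}{2}(\vect{y}_k^{*}-f_j(\vect{x}))^T\widehat{\matr{W}}_{kj}(\vect{y}_k^{*}-f_j(\vect{x}))$. Since $\widehat{\matr{W}}_{kj}$ is positive definite, this extra term attains its maximum (zero) uniquely at $f_j(\vect{x})=\vect{y}_k^{*}$, and by construction $f_j=\hat f^{(-k)}_{j\,\lambda}$ already maximizes $S_j^{(-k)}$ and satisfies $f_j(\vect{x})=\vect{y}_k^{*}$. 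Therefore $\hat f^{(-k)}_{j\,\lambda}$ is also the unique maximizer of the full criterion on the modified data.

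Next I would exploit the linear-smoother representation. By (\ref{fhatVsigma2}), the maximizer on the modified data is
\[
\hat f^{(-k)}_{j\,\lambda}(\vect{x}) \;=\; \sum_{r\neq k}\widehat{\matr{H}}_{rj}(\lambda)\vect{y}_r \;+\; \widehat{\matr{H}}_{kj}(\lambda)\vect{y}_k^{*},
\]
while $\hat f_{j\,\lambda}(\vect{x})=\sum_{r\neq k}\widehat{\matr{H}}_{rj}(\lambda)\vect{y}_r+\widehat{\matr{H}}_{kj}(\lambda)\vect{y}_k$. Subtracting and using $\vect{y}_k^{*}=\hat f^{(-k)}_{j\,\lambda}(\vect{x})$ yields
\[
\vect{y}_k^{*}-\hat f_{j\,\lambda}(\vect{x}) \;=\; \widehat{\matr{H}}_{kj}(\lambda)\bigl(\vect{y}_k^{*}-\vect{y}_k\bigr).
\]
Adding $\vect{y}_k - \vect{y}_k^{*}$ to both sides and collecting terms gives
\[
\bigl(\matr{I}-\widehat{\matr{H}}_{kj}(\lambda)\bigr)\bigl(\vect{y}_k-\vect{y}_k^{*}\bigr) \;=\; \vect{y}_k-\hat f_{j\,\lambda}(\vect{x}),
\]
and invertibility of $\matr{I}-\widehat{\matr{H}}_{kj}(\lambda)$ lets me solve
\[
\vect{y}_k-\hat f^{(-k)}_{j\,\lambda}(\vect{x}) \;=\; \bigl(\matr{I}-\widehat{\matr{H}}_{kj}(\lambda)\bigr)^{-1}\bigl(\vect{y}_k-\hat f_{j\,\lambda}(\vect{x})\bigr).
\]
Substituting this identity into the definition (\ref{CVrep}) of $CV_j(\lambda)$ and using that the quadratic form in $\widehat{\matr{W}}_{kj}$ is invariant under sign change of its argument yields the claimed formula.

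The main (and really the only) subtle point is the uniqueness argument in the first step; everything after that is linear algebra. Uniqueness requires $\widehat{\matr{W}}_{kj}$ positive definite, which is the stated hypothesis, together with strict concavity of $S_j^{(-k)}$ in the spline coefficients---the latter holds as long as $\matr{B}^T\sum_{r\neq k}\widehat{\matr{W}}_{rj}\matr{B}+2\lambda\matr{R}$ is invertible, which follows from the invertibility of the analogous full-data Hessian implicit in (\ref{Hhat_Vdiag}). I would make this strict-concavity remark explicitly, so that the ``unique maximizer'' claim on the modified data set (and hence the chain of equalities above) is fully justified.
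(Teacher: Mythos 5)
Your proposal is correct and follows essentially the same route as the paper: the paper's Lemma \ref{lemma1} is exactly your pseudo-observation step (the paper phrases it as a sandwich of two optimality inequalities rather than as ``reduced criterion plus a nonpositive term vanishing at $\vect{y}_k^{*}$,'' which also lets it avoid any uniqueness discussion by comparing only the fitted values at $\vect{x}$), and the paper's Lemma \ref{lemma2} is your linear-smoother substitution yielding $(\matr{I}-\widehat{\matr{H}}_{kj})[\hat f^{(-k)}_{j\lambda}(\vect{x})-\vect{y}_k]=\hat f_{j\lambda}(\vect{x})-\vect{y}_k$. No gaps.
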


\noindent The proof follows directly from Lemma 2 in the Appendix, which holds in a slightly more general setting.

\section{Simulation study}\label{sec:sim_studies}

We carry out a simulation study under three different designs considering that the hidden states, the $z_{ik}$s, can take values 1 or 2.  For each design 300 independent data sets are generated, each with $N=100$ replicates.  In design 1,  $z_{1k}, \ldots, z_{nk}$ are \textit{iid} and
$\matr{V}$ follows the homogeneous random intercept model as in (\ref{restricted_V}).  In design 2,  $z_{1k}, \ldots, z_{nk}$ follow a Markov structure
and $\matr{V}$ also follows the homogeneous random intercept model.  In design 3, $z_{1k}, \ldots, z_{nk}$ are independent with the distribution of $z_{ik}$ depending on a univariate covariate,  $v_{ik}$.  In this third design, we take $\matr{V} = \sigma^2 \matr{I}$. 
To study all three designs we use the same vector of evaluation points $\vect{x}$ and the same true functions $f_1$ and $f_2$. The vector $\vect{x}=(x_1,\ldots,x_n)^T$ consists of $n=10$ equally spaced points, $1, 12, 23,  \ldots, 89, 100$. The true function $f_2$ is the same we used in the simulation study presented in \citet{deSouza2014}. The true function $f_1$ is simply $f_2 - 0.1$. In the third study, for each simulated data set,  we generate   $v_{ik}, k=1,\ldots,n, i=1,\ldots,N$.  Figure \ref{Sim:fitted:example} contains example data sets generated from each of the three designs.

For Designs 1 and 2 we generate each simulated data set as follows.
\begin{enumerate}
\item Generate the $z_{ik}$s according to the specified model -  \textit{iid} for Design 1, Markov for Design 2.  For the \textit{iid} model, we set $p_1=p(z_{ik}=1)=0.5$.  For Markov $z_{ik}$s, we set transition probabilities $a_{12}=p(z_i=2|z_{i-1}=1)=0.3$ and $a_{21}=p(z_i=1|z_{i-1}=2)=0.4$ and initial probabilities $\pi_1=\pi_2=0.5$. 
\item Generate the $y_{ik}$s according to the homogeneous random intercept model of Section  \ref{Sec:choices_alpha_V} with
   $\tau^2 = 10^{-4}$ and $\sigma^2=10^{-5}$.
\item Repeat steps 1 and 2 $N=100$ times to obtain a data set of 100 replicates.
\end{enumerate}

For Design 3 we generate each simulated data set as follows.

\begin{enumerate}
\item Generate $v_{ik}$s \textit{iid} $N(0,1)$.
\item Generate the $z_{ik}$s such that  $p(z_{ik}=1|v_{ik}) \equiv p_1(v_{ik}) = 1/  [ 1 + \exp( \beta_{0} + \beta_{1}v_{ik})]$
 and so $\log [ p_2(v_{ik})/p_1(v_{ik}) ] =\beta_{0} + \beta_{1}v_{ik}$.  We set $\beta_0=2$ and $\beta_1=5$.
\item Generate the $y_{ik}$s as follows.
  If $z_{ik} = 1$ then  $y_{ik} = f_1(x_i) + e_{ik}$. 
  If $z_{ik} = 2$ then $y_{ik} = f_2(x_i) + e_{ik}$.   The
 $\epsilon_{ik}$s are \textit{iid} $N(0,\sigma^2)$. We set $\sigma^2=5\times10^{-5}$.
\item Repeat steps 1, 2 and 3 $N=100$ times to obtain a data set of 100 replicates.
\end{enumerate}

We analyze the simulated data under each design using the proposed EM algorithm. We set initial parameter values to the true parameter values to speed up computation. We did try initial values that were different than the true parameter values and the EM algorithm also converged, but it took longer than when starting from the truth, as expected.

The values of $\lambda_1$ and $\lambda_2$ are fixed and equal to $10^{-4}$ in the study of all designs. We choose this value by examining a few simulated data sets and a range of lambda values.  We find that the results of these preliminary analyses are not sensitive to the choice of smoothing parameter over a wide range of lambda values.

\subsection{Results}\label{results:rep}

The three plots in Figure \ref{Sim:fitted:example} show the fitted values $\hat{f}_1(\vect{x})$ and $\hat{f}_2(\vect{x})$ (dashed curves) for a data set generated from each simulation design.

We assess the quality of the estimated functions via the pointwise empirical mean squared error (EMSE) as in \citet{deSouza2014}. For all designs $\hat{f}_1(\vect{x})$ and $\hat{f}_2(\vect{x})$ produce very small values of EMSE ($< 2\times 10^{-6}$).  However, when generating data according to Design 3, the EMSE values for $\hat{f}_1(\vect{x})$  are larger than for Designs 1 and 2. 
  
We observe that in all cases we are slightly underestimating the values of the variance parameters. This may be due to the challenges of correctly adjusting the degrees of freedom in the estimates,  in order to account for the estimation of the $f_j$s.  Recall that, in Designs 1 and 2, the error variance satisfies $10^{5}\sigma^2=1$ and in Design 3, $10^{5}\sigma^2=5$. The averages of our estimates of $10^{5} \sigma^2$ (with standard errors) under Designs 1, 2 and 3 are, respectively, 0.978 (0.046), 0.977 (0.045) and 4.919 (0.238). In Designs 1 and 2, we have an additional variance parameter, namely, the variance of the random effect intercept, with $10^{4}\tau^2=1$. In these cases, the averages of $10^{4}\tau^2$ are equal to 0.977 with standard deviations equal to 0.152.

Table \ref{tb:zparam:rep} contains the mean and the standard deviation of the estimates of the parameters of the latent process under each simulation design, along with the averages of our proposed standard errors (SEs).  Note that the standard deviations of the estimates are close to the values of the means of the proposed SEs, as desired. Table \ref{tb:zparam:rep} also shows the empirical coverage percentages of a 90\% and a 95\% confidence interval. We consider confidence intervals of the form
``mean of the parameter estimates  $\pm z_{\alpha/2} ~ \times$  proposed SE", 
 where $z_{\alpha/2}$ is the $\alpha/2$ quantile of a standard normal distribution with $\alpha=0.1$ and 0.05. The empirical coverage percentages under all three simulation designs are very close to the true level of the corresponding confidence interval.

\section{Analysis of  the power usage data}\label{data_analysis}

The data shown in Figure \ref{building_replicates} consist of daytime hourly power usage of a building from 9am to 4pm ($n=8$ observations in a day) on $N=44$ business days in June and July 2009. For the same days and hours we also have available the temperature at a local weather station. We apply our proposed methodology to these  data treating each day as a replicate and modelling power usage as arising from $J=2$ functions, one function giving power usage when the chiller is off ($j=1$), and the other function giving power usage when the chiller is on ($j=2$). 
In Section \ref{uncorr:obs} we present the results assuming the covariance matrix $\matr{V}$ is diagonal
and in Section \ref{corr:obs} we present the results when we assume  $\matr{V}$ is generated by the non-homogeneous random intercept model  as in (\ref{covEq:RI:diff}).

\subsection{Results: diagonal $\matr{V}$}\label{uncorr:obs}

In this section we consider two models for $\matr{V}$: $\matr{V}=\sigma^2\matr{I}$ and $\matr{V}= \matr{V}_{\vect{z}_k}= \mbox{diag}(\sigma^2_{z_{1k}},\ldots,\sigma^2_{z_{nk}})$. We use the ECM algorithm described in Section \ref{V:restricted:1} to estimate the model parameters considering \textit{iid} $z_{ik}$s, Markov $z_{ik}$s and $z_{ik}$s that are independent with distribution depending on temperature. The smoothing parameters, the $\lambda_j$s, are chosen by cross-validation as described in detail in Section \ref{choice:lambda:rep}.

Figures \ref{building_noHMM_equals2} and \ref{building_noHMM_diffs2} present the fitted functions for \textit{iid} hidden states $z_{ik}$s when we assume $\matr{V}=\sigma^2\matr{I}$ and  $\matr{V}_{\vect{z}_k}$, respectively. We can observe that the fitted curves are very similar in the two figures. The estimated curve giving power usage when the chiller is on, obtained assuming $\matr{V}=\sigma^2\matr{I}$, is slightly smoother than the one obtained assuming $\matr{V}_{\vect{z}_k}$.
Table \ref{tb:Building:iid}  presents the parameter estimates and chosen $\lambda_j$s. We can see that the estimates of $p_j=p(z_{ik}=j)$ from the two models for $\matr{V}$ agree within the reported standard errors. We also observe in the lower half of the Table that the estimated variance when the chiller is on is much higher than when the chiller is off.

Figures \ref{building_HMM_equals2} and \ref{building_HMM_diffs2} present the fitted curves for Markov $z_{ik}$s when we assume $\matr{V}=\sigma^2\matr{I}$ and  $\matr{V}_{\vect{z}_k}$, respectively. As in the \textit{iid} case, the fitted curve giving power usage when the chiller is on obtained assuming $\matr{V}=\sigma^2\matr{I}$ is slightly smoother than the one obtained assuming $\matr{V}_{\vect{z}_k}$.
Table \ref{tb:Building:Markov}  provides information on the estimated model parameters and the chosen smoothing parameters.   As in the $iid$ case, the estimated variance when the chiller is on is much higher than when the chiller is off. We observe that  the estimates of $a_{21}$, the transition probability from ``chiller on" to ``chiller off", are very small or equal to zero.  Any estimate of $a_{21}$ is  expected to be small, as   there is only one replicate in the data set where we observe this transition. The estimate of zero is reasonable when we assume different variances;    $\hat{a}_{21}$ is zero because the transition happens gradually, which our model does not allow, and the method incorrectly classifies all observations as coming from the condition ``chiller on", failing to detect the transition.   This replicate is  the green curve in Figure \ref{building_HMM_diffs2}.   

Figure \ref{building_temperature} presents the fitted curves when we assume the $z_{ik}$s are independent with the distribution of $z_{ik}$ depending on temperature via the following logistic regression model: 
$$\log \frac{ p(\mbox{chiller on}~ | ~\mbox{temperature} )} { p(\mbox{chiller off} ~ | ~\mbox{temperature} )} = \beta_{0} + \beta_{1}~\mbox{temperature}.$$ 

\noindent Table \ref{tb:Building:cov}
shows the corresponding estimated model parameters assuming $\matr{V}_{\vect{z}_k}$ along with the chosen smoothing parameters, the $\lambda_j$s. We observe in Table \ref{tb:Building:cov} that the standard error for $\hat{\beta}_{1}$ is very small and by considering a confidence interval of the form $\hat{\beta}_{1}  \pm 1.96 \times  \mbox{SE}(\hat{\beta}_{1})$ we conclude that the coefficient $\beta_{1}$ is statistically significant. 

\subsection{Results: correlated observations generated by the non-homogeneous random intercept model} \label{corr:obs}

In the analyses of Section \ref{uncorr:obs}, we see that the variability in energy consumption when the chiller is on is higher than when the chiller is off.  Thus, models such as $\matr{V} = \sigma^2 \matr{I}$ or $\matr{V}$ following the homogeneous random intercept model may not be appropriate.  Therefore, to model this heterogeneity in variance and the correlation between observations, 
we fit the proposed switching nonparametric regression model to the power usage data assuming the covariance matrix $\matr{V}$ is generated by the non-homogeneous random intercept model as in (\ref{covEq:RI:diff}). We use the ECM algorithm described in Section \ref{sec:parameter_estimation}  and in Section 1.2 of the Supplementary Material to obtain the parameter estimates. We conduct the analysis assuming the hidden states $z_{ik}$s are \textit{iid}. We assume that $\lambda_1=\lambda_2=\lambda$ and choose the smoothing parameters via a ``brute force" leave-one-curve-out method over a grid of possible values of $\lambda$ (see Table 1 of Supplementary Material).  

Table \ref{tb:Building:iid:RI:complex} presents the parameter estimates. We observe that the estimates of $p_1$ and $p_2$ in Table \ref{tb:Building:iid:RI:complex} agree within the reported standard errors with the estimates obtained in Table \ref{tb:Building:iid}  where we assume the observations are uncorrelated. Figure \ref{building_RI_noHMM_complex} shows the corresponding fitted curves. We can observe that the fitted function corresponding to the condition ``chiller on" is lower than that in  Figures  \ref{building_noHMM_equals2} to \ref{building_temperature}.  The non-homogeneous random intercept model appears to ``explain" days of high power usage by a larger variability of the ``chiller on" random intercept.    Thus the replicates with very high power usage have less of an impact on the final fitted  ``chiller on"  curve.

\section{Discussion}\label{sec:discussion}

We have introduced a method for the analysis of data arising from random samples of a process with a complex structure.  The structure  depends on  a latent state process where each state corresponds to a true  smooth regression function.   The estimation techniques and standard error calculations were developed for  several specific cases of state processes and error covariances.      
We have considered restrictive covariance structures, save for the case where $\matr{V}$ is completely unrestricted.  While the covariance models we consider may not capture all of the dependencies in a data set, our techniques and ideas should carry over to more complex time series modelling of the error process. For instance, we can model more complicated covariance structures via random regression approaches, such as with B-spline basis functions or with lines that have random slopes in addition to random intercepts. Similarly, we can use our methods to consider more complex models for the latent process, such as a Markov model with covariate-dependent transition probabilities.
Further useful extensions might incorporate a dependence among replicates; for instance, in studying energy consumption of several buildings, one would want to incorporate a random ``building" effect.

\section*{Acknowledgements}
We would like to thank the Editor, Associate Editor and reviewers for their insightful questions and comments. 

\bibliographystyle{imsart-nameyear}
\bibliography{references}

\section*{Appendix}
\subsection*{Proof of Theorem 1}  \label{app:proof}

Theorem 1 is based on the following lemmas, which frame the problem for fixed $j$  and fixed $\lambda$ (so these are dropped in notation) and with general matrices $\mathcal{W}_r$, $r=1,\ldots, N$.  Lemma 1 holds for general penalties, while Lemma 2 places further restrictions, restrictions that hold in our setting.  Throughout, we assume that all maximizers exist.

Let $\hat{f}^{(-k)} $ maximize
\[
{S}^{(-k)}(f) = -\frac{1}{2} \sum_{r=1; r \neq k}^N  \big[ \vect{y}_r -f(\vect{x})\big]^T \mathcal{W}_{r} \big[ \vect{y}_r -f(\vect{x})\big] +
P(f).
\]

\newtheorem{lemma}{Lemma}
\begin{lemma}
 Let $\hat{f}^{(\ast k)}$ maximize
\begin{eqnarray}
S^{(\ast k)}(f) &=&
  - \displaystyle \frac{1}{2}  [  \hat{f}^{(-k)}(\vect{x}) - f(\vect{x}) ]^T  \mathcal{W}_k  [  \hat{f}^{(-k)}(\vect{x}) - f(\vect{x}) ]
  \nonumber \\
  &&
  - \displaystyle \frac{1}{2} \sum_{r=1, r \neq k}^{N} 
    [\vect{y}_r - f(\vect{x}) ]^T \mathcal{W}_{r} [ \vect{y}_r - f(\vect{x})]    ~~
 + ~~ P(f).
 \nonumber \end{eqnarray}
If $\mathcal{W}_k$ is positive definite then
 $\hat{f}^{(-k)}(\vect{x}) = \hat{f}^{(\ast k)}(\vect{x})$.
\label{lemma1}
\end{lemma}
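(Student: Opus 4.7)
The approach is to recognize that $S^{(\ast k)}(f)$ is exactly $S^{(-k)}(f)$ plus an auxiliary quadratic penalty that measures how far $f(\vect{x})$ deviates from $\hat{f}^{(-k)}(\vect{x})$. Writing this decomposition explicitly,
\[
S^{(\ast k)}(f) = -\tfrac{1}{2}\big[\hat{f}^{(-k)}(\vect{x}) - f(\vect{x})\big]^T \mathcal{W}_k \big[\hat{f}^{(-k)}(\vect{x}) - f(\vect{x})\big] + S^{(-k)}(f),
\]
makes the key observation transparent: since $\mathcal{W}_k$ is positive definite, the added term is $\le 0$, with equality if and only if $f(\vect{x}) = \hat{f}^{(-k)}(\vect{x})$. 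At the same time, the second summand is bounded above by $S^{(-k)}(\hat{f}^{(-k)})$ by the defining property of $\hat{f}^{(-k)}$.

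My plan is therefore two short steps. First, evaluate $S^{(\ast k)}$ at $f = \hat{f}^{(-k)}$. The quadratic form vanishes, and so $S^{(\ast k)}(\hat{f}^{(-k)}) = S^{(-k)}(\hat{f}^{(-k)})$. Combining the two inequalities above then shows
\[
S^{(\ast k)}(f) \;\le\; 0 + S^{(-k)}(\hat{f}^{(-k)}) \;=\; S^{(\ast k)}(\hat{f}^{(-k)})
\]
for every $f$, which establishes that $\hat{f}^{(-k)}$ is itself a maximizer of $S^{(\ast k)}$ and identifies the maximal value.

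Second, take any maximizer $\hat{f}^{(\ast k)}$ of $S^{(\ast k)}$. Equality in the displayed chain then forces both inequalities to be equalities, so in particular the quadratic form must vanish at $\hat{f}^{(\ast k)}$. Positive definiteness of $\mathcal{W}_k$ then yields $\hat{f}^{(\ast k)}(\vect{x}) = \hat{f}^{(-k)}(\vect{x})$, which is the claim. The argument nowhere requires uniqueness of either maximizer, only that positive definiteness turns the non-positivity of the quadratic form into a genuine identifiability statement for the vector of evaluations $f(\vect{x})$.

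There is essentially no obstacle here; the proof is a one-line convexity observation dressed up in notation. The only point one must be careful about is to phrase the conclusion in terms of the evaluation vectors $\hat{f}^{(-k)}(\vect{x})$ and $\hat{f}^{(\ast k)}(\vect{x})$ rather than the functions themselves, since the quadratic penalty only constrains the values at $\vect{x}$ and $P(f)$ could permit multiple maximizing $f$'s agreeing at $\vect{x}$.
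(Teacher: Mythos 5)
Your proposal is correct and follows essentially the same route as the paper: both exploit the decomposition $S^{(\ast k)}(f) = S^{(-k)}(f) - \tfrac{1}{2}[\hat{f}^{(-k)}(\vect{x}) - f(\vect{x})]^T \mathcal{W}_k [\hat{f}^{(-k)}(\vect{x}) - f(\vect{x})]$ together with the optimality of $\hat{f}^{(-k)}$ for $S^{(-k)}$ and of $\hat{f}^{(\ast k)}$ for $S^{(\ast k)}$ to force the quadratic form to vanish, and then invoke positive definiteness of $\mathcal{W}_k$. Your packaging (first showing $\hat{f}^{(-k)}$ is itself a maximizer of $S^{(\ast k)}$, then forcing equality) is a minor rearrangement of the paper's direct chaining of the two inequalities, and your closing remark that the conclusion concerns only the evaluation vectors at $\vect{x}$ is consistent with how the lemma is stated.
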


\subsection*{Proof of Lemma 1.}
For simplicity let $k=1$. We want to show that $\hat{f}^{(-1)} = \hat{f}^{(\ast 1)}$. We know $\hat{f}^{(-1)}$ maximizes $S^{(-1)}(f)$  and, therefore,
\begin{equation}
 S^{(-1)}(\hat{f}^{(-1)}) - S^{(-1)}(\hat{f}^{(\ast 1)}) \geq 0. \nonumber
 \label{proofextraEq}
 \end{equation}

We also know that $\hat{f}^{(\ast 1)}$ maximizes $S^{(\ast 1)}(f)$. Thus, $S^{(\ast 1)}(\hat{f}^{(\ast 1)}) - S^{(\ast 1)}(\hat{f}^{(- 1)}) \geq 0$, that is,
\begin{eqnarray}
&& - ~ \frac{1}{2} \big[ \hat{f}^{(- 1)}(\vect{x}) -\hat{f}^{(\ast 1)}(\vect{x})\big]^T \mathcal{W}_{1} \big[ \hat{f}^{(- 1)}(\vect{x}) -\hat{f}^{(\ast 1)}(\vect{x})\big] 
\nonumber \\
&&
- \frac{1}{2}\sum_{r=2}^N  \big[ \vect{y}_r - \hat{f}^{(\ast 1)}(\vect{x}) \big]^T \mathcal{W}_{r} \big[ \vect{y}_r -\hat{f}^{(\ast 1)}(\vect{x}) \big] 
~ + ~  P(\hat{f}^{(\ast 1)})  
\nonumber
\\
&&
+ ~ \frac{1}{2}\sum_{r=2}^N  \big[ \vect{y}_r - \hat{f}^{(- 1)}(\vect{x}) \big]^T \mathcal{W}_{r} \big[ \vect{y}_r -\hat{f}^{(- 1)} (\vect{x})\big] ~- ~ P(\hat{f}^{(- 1)}) 
\geq 0,
\nonumber
\end{eqnarray}

\noindent such that
\begin{equation}
-\frac{1}{2} \big[ \hat{f}^{(- 1)}(\vect{x}) -\hat{f}^{(\ast 1)}(\vect{x})\big]^T \mathcal{W}_{1} \big[ \hat{f}^{(- 1)}(\vect{x}) -\hat{f}^{(\ast 1)}(\vect{x})\big]  \geq  S^{(- 1)}(\hat{f}^{(- 1)}) - S^{(- 1)}(\hat{f}^{(\ast 1)})  \geq 0, \nonumber
\end{equation}

\noindent which implies that
\begin{equation}
\big[ \hat{f}^{(- 1)}(\vect{x}) -\hat{f}^{(\ast 1)}(\vect{x})\big]^T \mathcal{W}_{1} \big[ \hat{f}^{(- 1)}(\vect{x}) -\hat{f}^{(\ast 1)}(\vect{x})\big] \leq 0, \nonumber
\end{equation} and, because $\mathcal{W}_{1}$ is positive definite,  $\hat{f}^{(- 1)} (\vect{x})= \hat{f}^{(\ast 1)}(\vect{x})$. $\Box$ 

\begin{lemma}
Suppose that $\mathcal{W}_k$ is positive definite for $k=1,\ldots, N$.
Let $\hat{f}$ maximize
\[
S(f) = 
  - \displaystyle \frac{1}{2} \sum_{k=1}^{N} 
    \big[ \vect{y}_k - f(\vect{x}) \big]^T \mathcal{W}_{k} \big[\vect{y}_k - f(\vect{x})\big]    ~~
 + ~~ P(f).
\]
  If there exist matrices $\mathcal{H}_k$, $k=1,\ldots, N$,  not depending on the $\vect{y}_r$s,  such that $\hat{f}(\vect{x})  =  \sum_{k=1}^N  \mathcal{H}_k  \vect{y}_k$,
then 
\[
(\matr{I}  - \mathcal{H}_{k})~ [ \hat{f}^{(-k)}(\vect{x}) - \vect{y}_k ]=  \hat{f}(\vect{x}) - \vect{y}_k. 
\]
\label{lemma2}
\end{lemma}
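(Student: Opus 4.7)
The plan is to combine Lemma~\ref{lemma1} with the linear-in-data representation of $\hat{f}$. Since each $\mathcal{W}_k$ is positive definite, Lemma~\ref{lemma1} applies and gives $\hat{f}^{(-k)}(\vect{x}) = \hat{f}^{(\ast k)}(\vect{x})$, where $\hat{f}^{(\ast k)}$ is the maximizer of the criterion obtained from $S(f)$ by replacing the $k$-th data vector $\vect{y}_k$ with the fixed quantity $\hat{f}^{(-k)}(\vect{x})$.

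The key observation is that $S^{(\ast k)}$ has exactly the same algebraic form as $S$: both are weighted sums of squares (with $\hat{f}^{(-k)}(\vect{x})$ playing the role of $\vect{y}_k$ for $S^{(\ast k)}$) plus the same penalty $P(f)$. Hence the hypothesized closed-form linear representation $\hat{f}(\vect{x}) = \sum_{r=1}^N \mathcal{H}_r \vect{y}_r$, in which the $\mathcal{H}_r$ do not depend on the data, must hold verbatim with $\vect{y}_k$ replaced by $\hat{f}^{(-k)}(\vect{x})$. This yields
$$\hat{f}^{(\ast k)}(\vect{x}) = \mathcal{H}_k\, \hat{f}^{(-k)}(\vect{x}) + \sum_{r \neq k} \mathcal{H}_r \vect{y}_r.$$
Setting this equal to $\hat{f}^{(-k)}(\vect{x})$ (via Lemma~\ref{lemma1}) and rearranging gives $(\matr{I} - \mathcal{H}_k)\hat{f}^{(-k)}(\vect{x}) = \sum_{r \neq k} \mathcal{H}_r \vect{y}_r = \hat{f}(\vect{x}) - \mathcal{H}_k \vect{y}_k$.

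A one-line manipulation then finishes the proof: adding and subtracting $(\matr{I} - \mathcal{H}_k)\vect{y}_k$ gives
$$(\matr{I} - \mathcal{H}_k)\bigl[\hat{f}^{(-k)}(\vect{x}) - \vect{y}_k\bigr] = \hat{f}(\vect{x}) - \mathcal{H}_k \vect{y}_k - \vect{y}_k + \mathcal{H}_k \vect{y}_k = \hat{f}(\vect{x}) - \vect{y}_k,$$
as required. I expect the only real conceptual hurdle to be justifying that the closed-form representation transfers from $S$ to $S^{(\ast k)}$; this step is made transparent by the stated data-independence of the matrices $\mathcal{H}_r$, so once Lemma~\ref{lemma1} is in hand the rest is essentially bookkeeping.
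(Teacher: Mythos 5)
Your proof is correct and follows essentially the same route as the paper's: invoke Lemma~\ref{lemma1} to identify $\hat{f}^{(-k)}$ with $\hat{f}^{(\ast k)}$, note that $S^{(\ast k)}$ is just $S$ with $\vect{y}_k$ replaced by $\hat{f}^{(-k)}(\vect{x})$ so the linear representation transfers, and rearrange. The paper's proof is the same computation, only slightly terser about why the representation carries over.
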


\subsection*{Proof of Lemma 2}
Note that $\hat{f}^{(\ast k)}$, as defined in Lemma 1,  is the maximizer of $S$ with $\vect{y}_k$ replaced by $\hat{f}^{(-k)}$.
By the assumption of the form of the maximizer of $S$,  $\hat{f}^{(\ast k)}(\vect{x})$ can be written as
\begin{eqnarray}
\hat{f}^{(\ast k)}(\vect{x}) &=& \sum_{r=1: r \neq k}^N \mathcal{H}_{r} \vect{y}_r + \mathcal{H}_{k} \hat{f}^{(-k)}(\vect{x}) \nonumber \\
&=& \hat{f} (\vect{x})-  \mathcal{H}_{k} \vect{y}_k + \mathcal{H}_{k}\hat{f}^{(-k)}(\vect{x}). \nonumber
\end{eqnarray}

\noindent From Lemma 1 we know $\hat{f}^{(- k)}(\vect{x}) = \hat{f}^{(\ast k)}(\vect{x})$. Thus,
\begin{equation}
\hat{f}^{(-k)}(\vect{x}) =  \hat{f} (\vect{x})-  \mathcal{H}_{k} \vect{y}_k +  \mathcal{H}_{k} \hat{f}^{(-k)}(\vect{x}). 
\nonumber
\end{equation}

\noindent Now subtracting $\vect{y}_k$ from both sides of this equation, we obtain
\begin{equation}
( \matr{I}  - \mathcal{H}_{k}) [ \hat{f}^{(-k)} (\vect{x})- \vect{y}_k ] =   \hat{f}(\vect{x}) - \vect{y}_k.  \nonumber
\end{equation}

\noindent $\Box$

\clearpage

\begin{table}
\begin{center}
\caption{\textit{Simulation study.} The mean and the standard deviation (SD) of the estimates of the parameters of the latent state process under each design, along with the mean of our proposed standard errors (SEs) and empirical coverage percentages of the proposed confidence intervals.}
\vspace{-.3cm}
\begin{tabular}{@{}r c c c c c@{}} 
\toprule

& & & & \multicolumn{2}{c}{empirical coverage}   \\ 

Design & true parameters & mean (SD) & mean of SEs &  90\%  & 95\%  \\

\midrule
1 & $p_1=0.5$ & 0.499 (0.016) & 0.016 &  90.3\% & 95.7\% \\ [.2cm]

2 & $\pi_1=0.5$   &   0.502 (0.050) & 0.050  & 89.7\% & 95.7\%  \\
  &  $a_{12}=0.3$ &   0.300 (0.021) & 0.020  & 90.0\% & 94.3\% \\
  & $a_{21}=0.4$  &   0.401 (0.024) & 0.025  & 89.7\% & 95.3\% \\ [.2cm]
  3           & $\beta_0=2$   &   2.010 (0.173) & 0.177  & 91.0\% & 96.7\%   \\
& $\beta_1=5$   &   5.047 (0.357) & 0.364  & 90.7\% & 94.3\%   \\

\bottomrule
\end{tabular}
\label{tb:zparam:rep}
\end{center}
\end{table}

\begin{table}
\begin{center}
\caption{Data analysis results for \textit{iid} $z_{ik}$s  for $\matr{V}=\sigma^2\matr{I}$ and $\matr{V}_{\vect{z}_k}=\mbox{diag}(\sigma^2_{z_{1k}},\ldots,\sigma^2_{z_{nk}})$,
with corresponding fitted curves in Figures \ref{building_noHMM_equals2} and \ref{building_noHMM_diffs2}, respectively. Note that for $\matr{V}=\sigma^2\matr{I}$ the estimate $\hat{\sigma}^2$ does not depend on $j$ and, therefore, its value appears in the middle row.}
\begin{tabular}{@{}l c c c c@{}}
  \toprule

 &  curve (chiller condition, $j$) & $\hat{\sigma}_j^2$ & $\hat{p}_j$ (SE) & $\lambda_j$  \\
  \midrule
 $\matr{V}=\sigma^2\matr{I}$ 
 &   black  (off, $j=1$)   & \multirow{2}{*}{103.5} & 0.665  (0.025) & 0.020 \\
  &  red  (on, $j=2$)   &  & 0.335  (0.025) & 0.078 \\
    \midrule
   $\matr{V}_{\vect{z}_k}=$
   & black (off, $j=1$)  & 12.7 & 0.658 (0.025) & 0.073 \\
 $\mbox{diag}(\sigma^2_{z_{1k}},\ldots,\sigma^2_{z_{nk}})$   & red  (on, $j=2$) & 355.4 & 0.342 (0.025) & 0.006 \\
\bottomrule
\end{tabular}
\label{tb:Building:iid}
\end{center}
\end{table}

\begin{table}
\begin{center}
\caption{Data analysis results for Markov $z_{ik}$s, for   $\matr{V}=\sigma^2\matr{I}$ and  $\matr{V}_{\vect{z}_k}=\mbox{diag}(\sigma^2_{z_{1k}},\ldots,\sigma^2_{z_{nk}})$, with corresponding fitted curves in Figures \ref{building_HMM_equals2} and \ref{building_HMM_diffs2}. Note that for $\matr{V}=\sigma^2\matr{I}$ the estimate $\hat{\sigma}^2$ does not depend on $j$ and, therefore, its value appears in the middle row. In addition, for the case when $\matr{V}_{\vect{z}_k}=\mbox{diag}(\sigma^2_{z_{1k}},\ldots,\sigma^2_{z_{nk}})$ we are not able to obtain SEs for $\hat{\pi}_1$,  $\hat{\pi}_2$, $\hat{a}_{12} $ and $\hat{a}_{21}$ as $\hat{a}_{21} < 10^{-16}$. }
\vspace{-.3cm}
\begin{tabular}{@{}l c c c c c c@{}}
  \toprule

  & &   &   & \multirow{2}{*}{\minitab[c]{$\hat{a}_{12}$ \\ (SE)}} & \multirow{2}{*}{\minitab[c]{$\hat{a}_{21}$ \\ (SE)}}  &   \\
&curve (chiller condition, $j$)& $\hat{\sigma}_j^2$ & $\hat{\pi}_j$ (SE) & & & $\lambda_j$   \\
  \midrule
$\matr{V}=\sigma^2\matr{I}$  &black (off, $j=1$)  & \multirow{2}{*}{103.4} & 0.705  (0.069) & \multirow{2}{*}{\minitab[c]{0.024 \\ (0.011)}} & \multirow{2}{*}{\minitab[c]{0.00991\\ (0.00986)}} & 0.019 \\
 &   red  (on, $j=2$)   &                          & 0.295  (0.069) &                        &                                       & 0.083 \\
    \midrule
$\matr{V}_{\vect{z}_k}=$
&    black (off, $j=1$)  & 12.2  & 0.682    & 0.015 &  $< 10^{-16}$ & 0.049 \\
$\mbox{diag}(\sigma^2_{z_{1k}},\ldots,\sigma^2_{z_{nk}})$
&    red   (on, $j=2$)  & 400.1 & 0.318   &         -          &               -                & 0.006 \\
\bottomrule
\end{tabular}
\label{tb:Building:Markov}
\end{center}
\end{table}

\begin{table}
\begin{center}
\caption{Data analysis results for $z_{ik}$s with distribution depending on a covariate (temperature) and $\matr{V}_{\vect{z}_k}=\mbox{diag}(\sigma^2_{z_{1k}},\ldots,\sigma^2_{z_{nk}})$ with corresponding fitted curves in Figure \ref{building_temperature}.}
\begin{tabular}{@{}r c c c@{}}
  \toprule
 curve (chiller condition, $j$) & $\hat{\sigma}^2_j$ & $\hat{\beta}$ (SE) & $\lambda_j$  \\
  \midrule
    black (off, $j=1$)  &17.9  & $\hat{\beta}_0= -13.013$ (1.411)   & 0.115  \\
    red  (on, $j=2$) & 274.0 & $\hat{\beta}_1= 0.607$ (0.068) & 0.030 \\
\bottomrule
\end{tabular}
\label{tb:Building:cov}
\end{center}
\end{table}

\begin{table}[ht]
\begin{center}
\caption{Data analysis results for \textit{iid} $z_{ik}$s and $\matr{V}$ depending on the hidden states generated by a non-homogeneous random intercept model with $\lambda_1=\lambda_2=0.1$ chosen via cross-validation and corresponding curves in Figure \ref{building_RI_noHMM_complex}. Note that  $\hat{\sigma}^2$, $\hat{\tau}_1^2$ and $\hat{\tau}_2^2$ do not depend on $j$.}
\begin{tabular}{@{}r c c c c@{}}
  \toprule
 curve (chiller condition, $j$) & $\hat{\sigma}^2$ & $\hat{\tau}_1^2$ & $\hat{\tau}_2^2$ & $\hat{p}_j$ (SE)   \\
  \midrule
    black  (off, $j=1$)   & \multirow{2}{*}{14.9} & \multirow{2}{*}{11.0} & \multirow{2}{*}{505.0} & 0.662  (0.025)  \\
    red  (on, $j=2$)   &  &  &  & 0.338  (0.025)  \\
\bottomrule
\end{tabular}
\label{tb:Building:iid:RI:complex}
\end{center}
\end{table}

\begin{figure}
        \centering
        \begin{subfigure}[b]{0.5\textwidth}
                \centering
                \includegraphics[width=\textwidth]{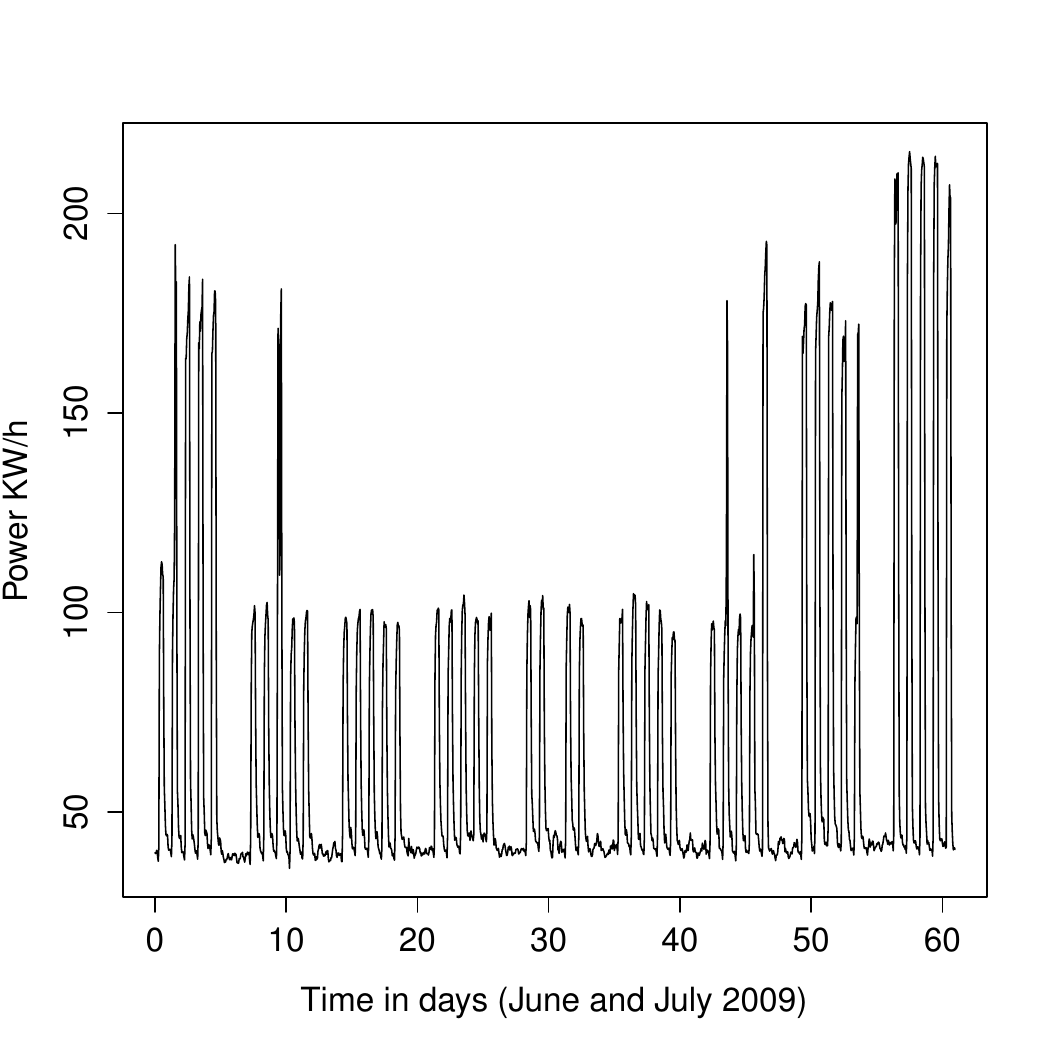}
                \caption{}
             \label{building_one}
        \end{subfigure}%
        \begin{subfigure}[b]{0.5\textwidth}
                \centering
                \includegraphics[width=\textwidth]{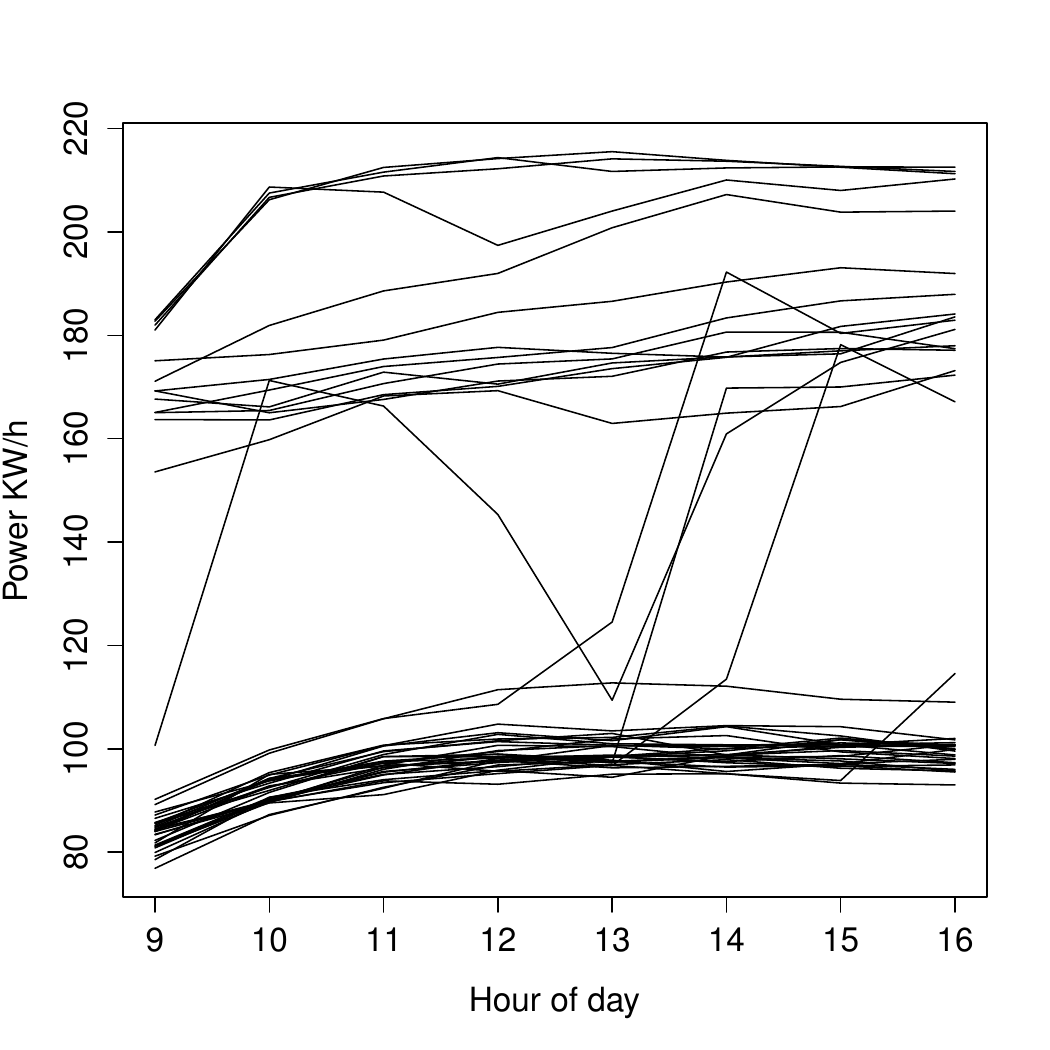}
             \caption{}
			\label{building_replicates}        
	\end{subfigure} 
\caption{ (a) Power usage in June and July 2009 in a building monitored by Pulse Energy. (b) Daytime power usage from 9am to 4pm on business days (each curve corresponds to a different day) in June and July 2009 for the same building.}
\label{building_figures}
\end{figure}

\begin{figure}
        \centering
        \begin{subfigure}[b]{0.5\textwidth}
                \centering
                \includegraphics[width=\textwidth]{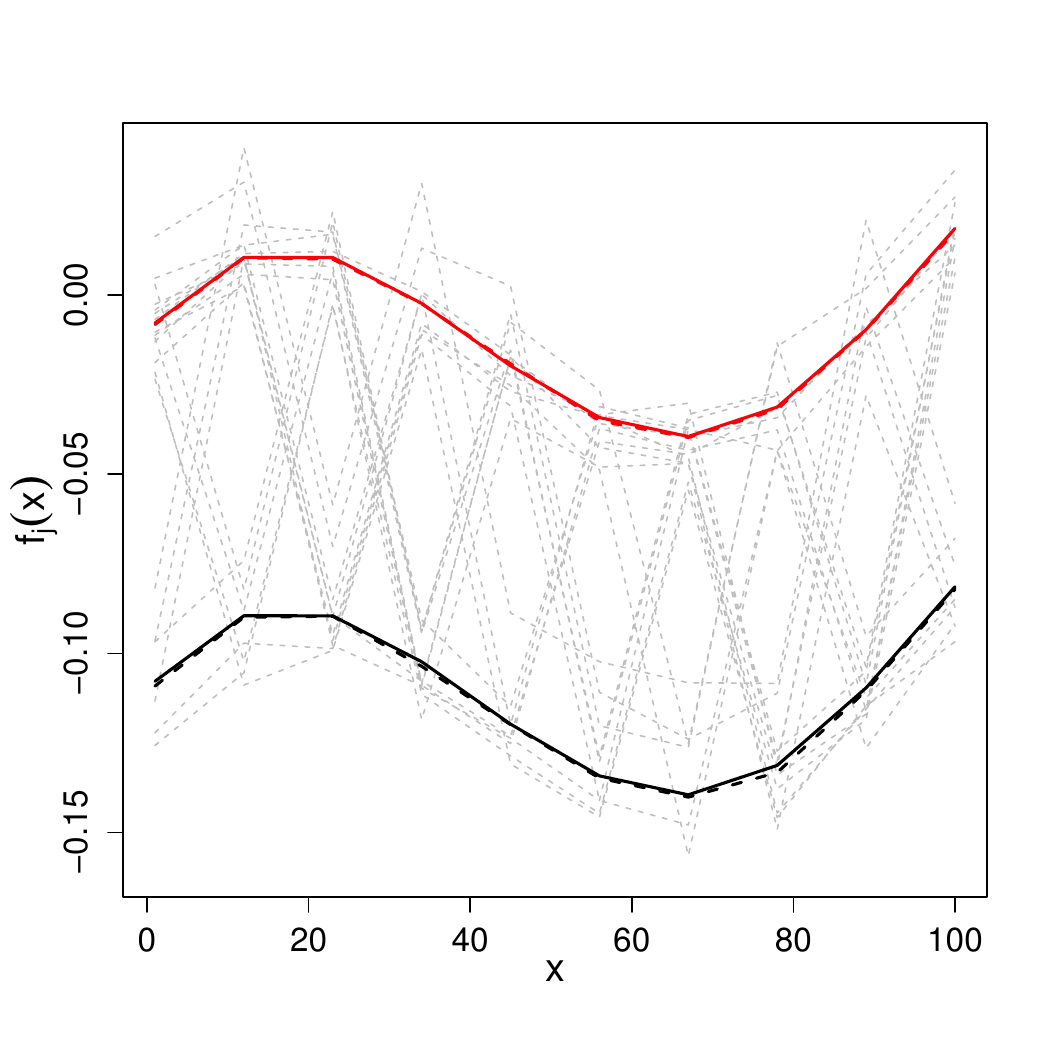}
                \caption{Design 1: \textit{iid} $z_{ik}$s}
             \label{S1:fitted:example}
        \end{subfigure}%
        \begin{subfigure}[b]{0.5\textwidth}
                \centering
                \includegraphics[width=\textwidth]{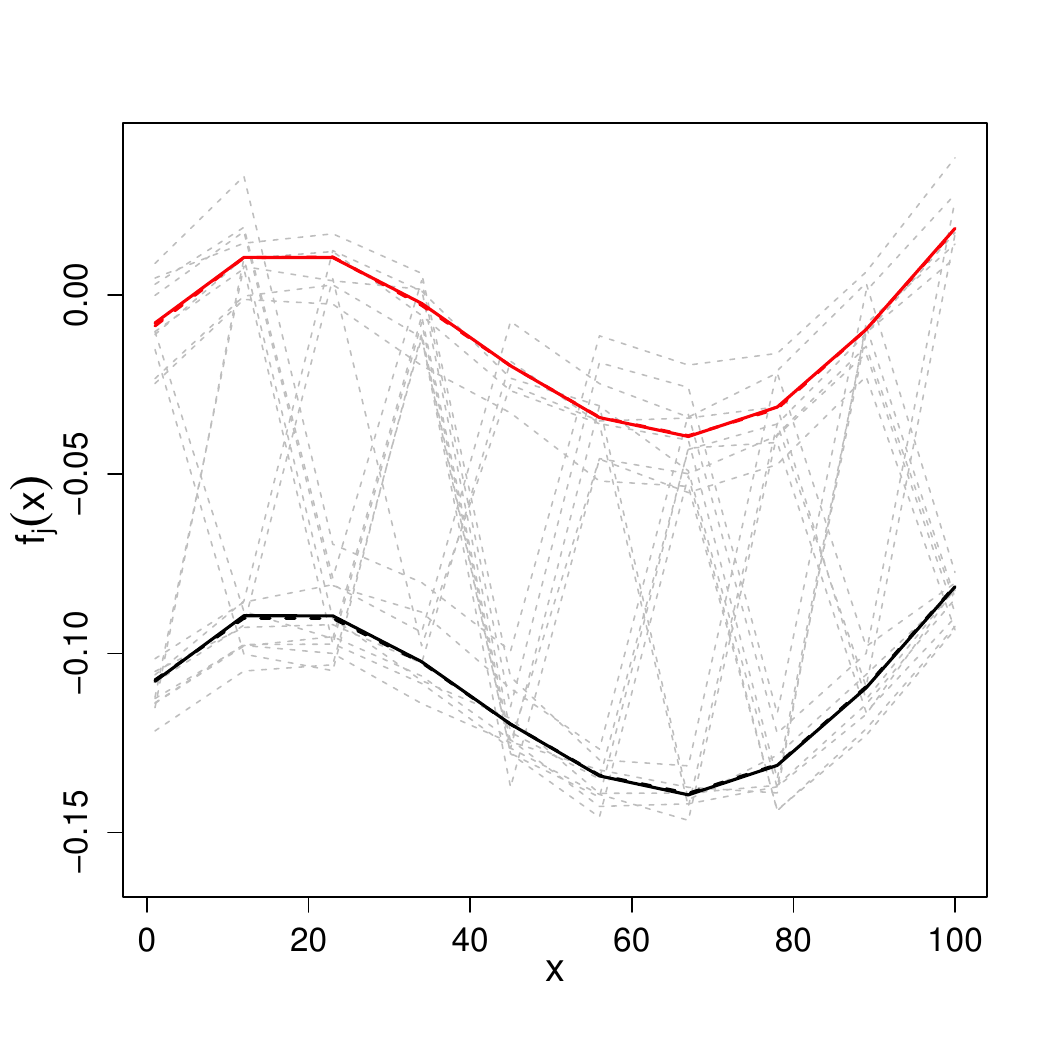}
                \caption{Design 2: Markov $z_{ik}$s}
                \label{S2:fitted:example}
        \end{subfigure} 
        \begin{subfigure}[b]{0.5\textwidth}
                \centering
                \includegraphics[width=\textwidth]{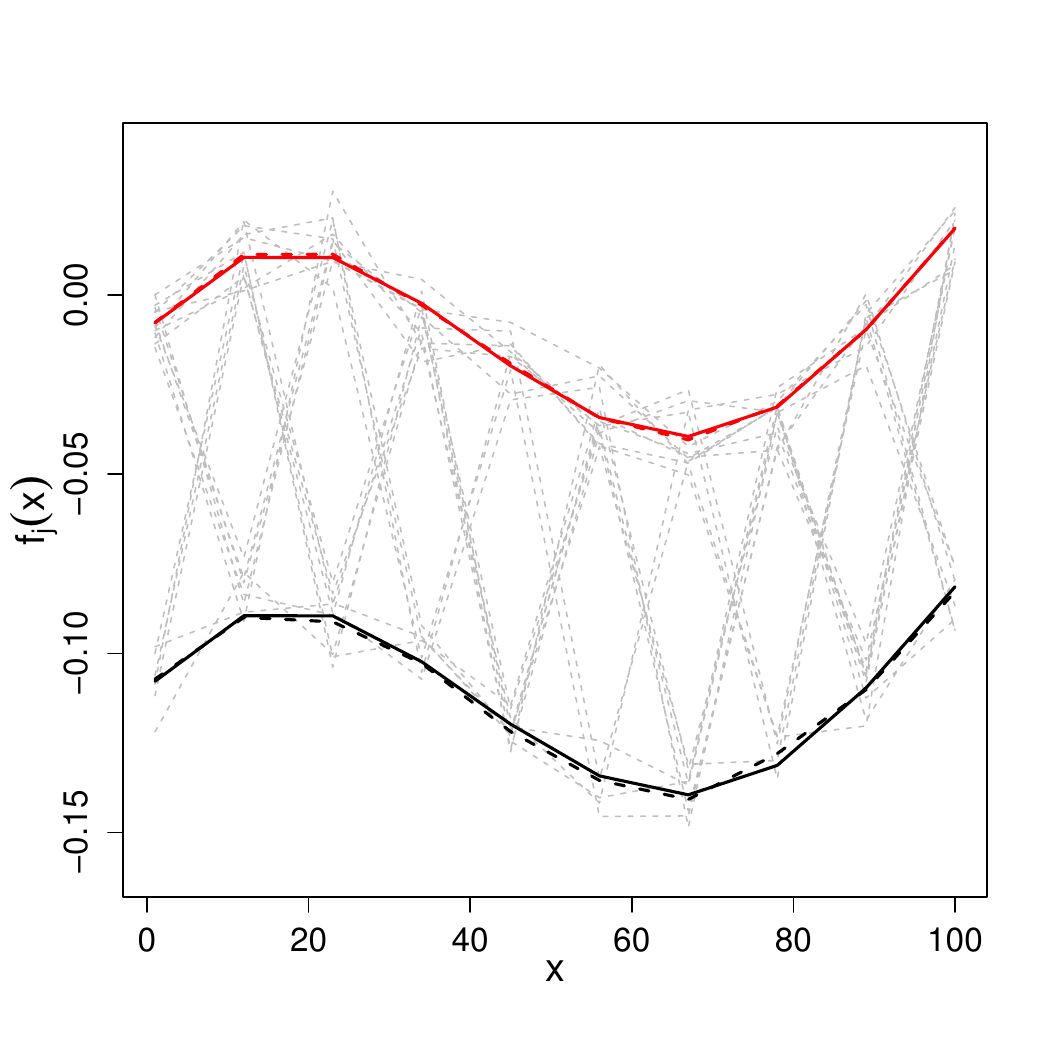}
                \caption{Design 3: covariate dependent $z_{ik}$s}
                \label{S3:fitted:example}
        \end{subfigure}
\caption{Example of simulated data along with $\hat{f}_1(\vect{x})$ and $\hat{f}_2(\vect{x})$ for each simulation design. The gray dashed curves correspond to 20 out of the 100 generated replicates. The black and red solid curves correspond to the true functions $f_1$ and $f_2$, respectively, evaluated only at $\vect{x}$. The black and red dashed curves correspond to $\hat{f}_1(\vect{x})$ and $\hat{f}_2(\vect{x})$, respectively.}
\label{Sim:fitted:example}
\end{figure}

\begin{figure}
        \centering
        \begin{subfigure}[b]{0.5\textwidth}
                \centering
                \includegraphics[width=\textwidth]{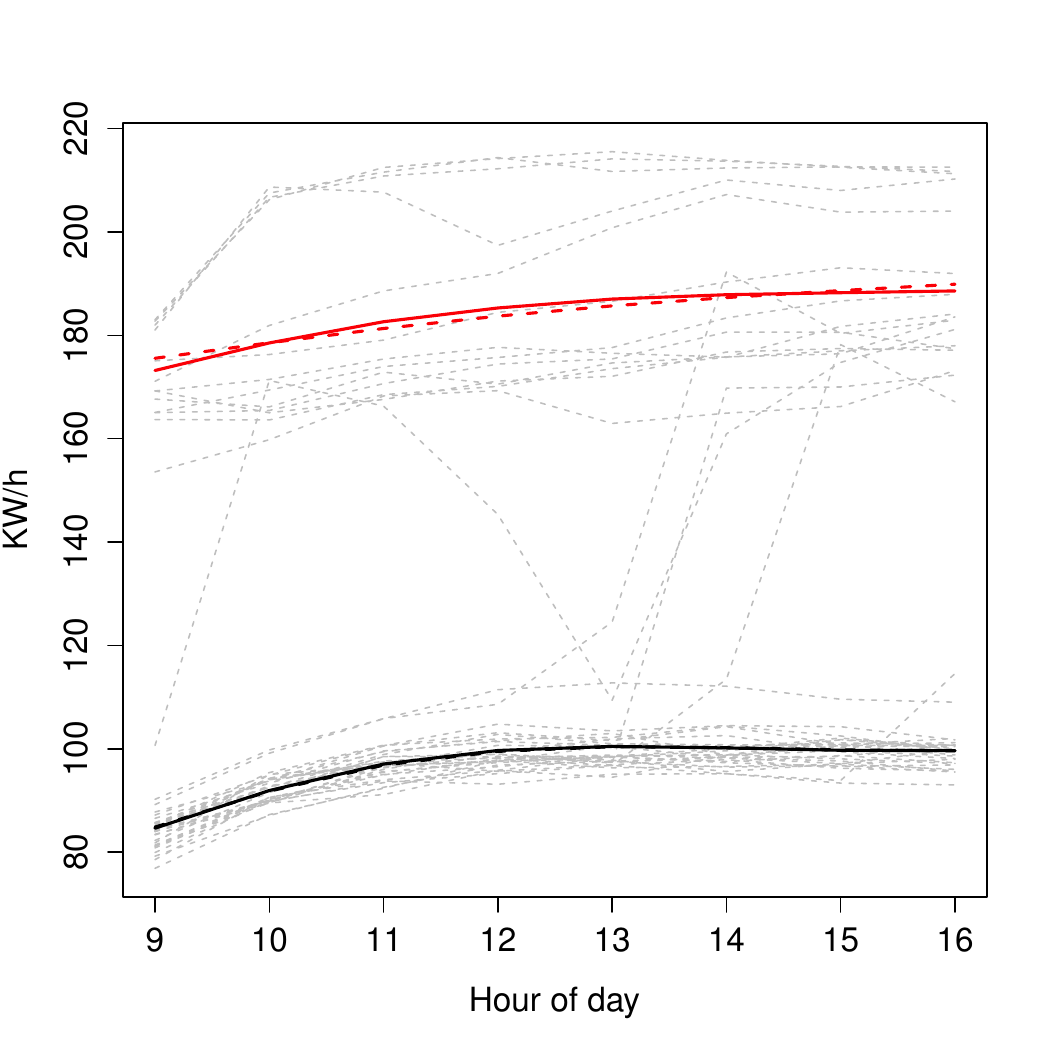}
                \caption{\textit{iid} $z_{ik}$s, $\matr{V}=\sigma^2\matr{I}$ }
                \label{building_noHMM_equals2}
        \end{subfigure}%
        \begin{subfigure}[b]{0.5\textwidth}
                \centering
                \includegraphics[width=\textwidth]{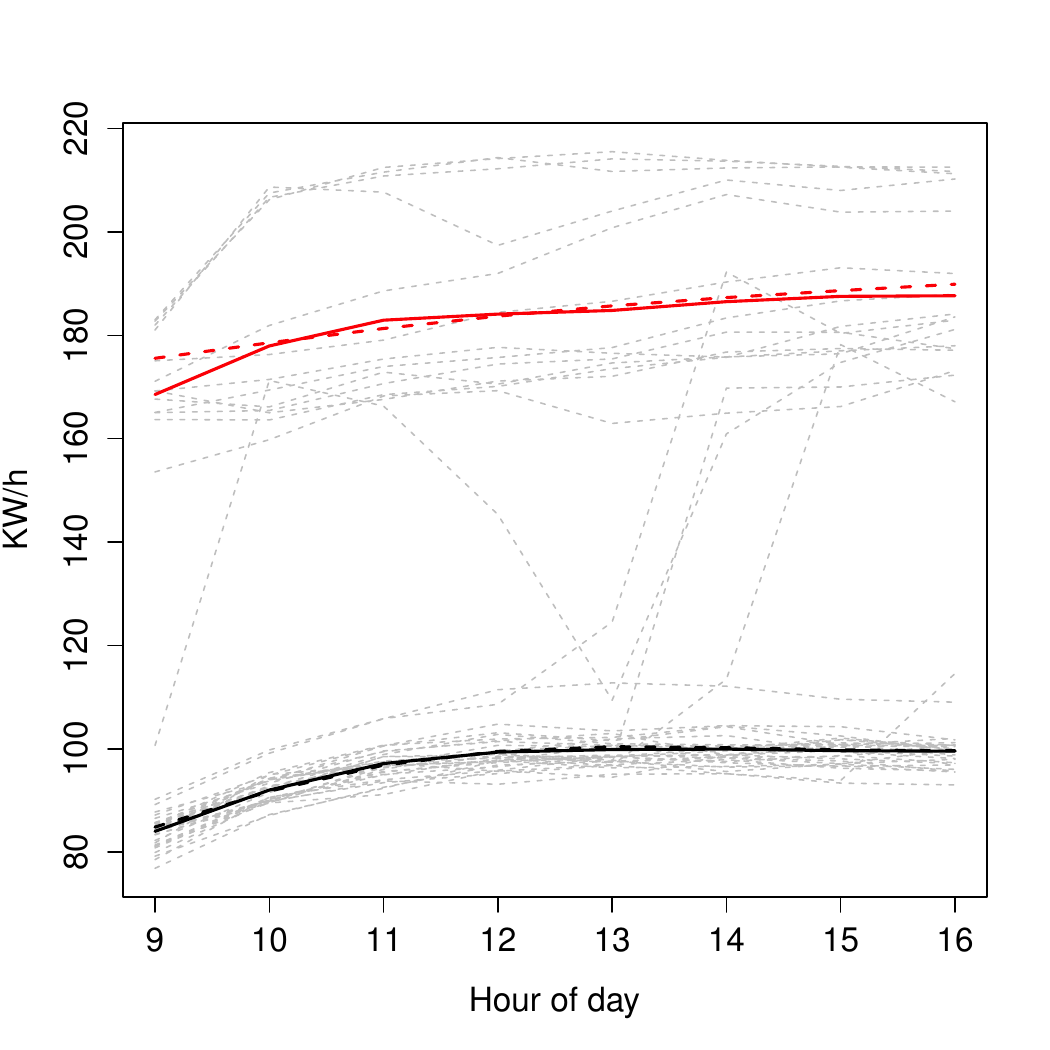}
                \caption{ \textit{iid} $z_{ik}$s$, \matr{V}_{\vect{z}_k}=\mbox{diag}(\sigma^2_{z_{1k}},\ldots,\sigma^2_{z_{nk}})$}
                \label{building_noHMM_diffs2}
        \end{subfigure}
        \begin{subfigure}[b]{0.5\textwidth}
                \centering
                \includegraphics[width=\textwidth]{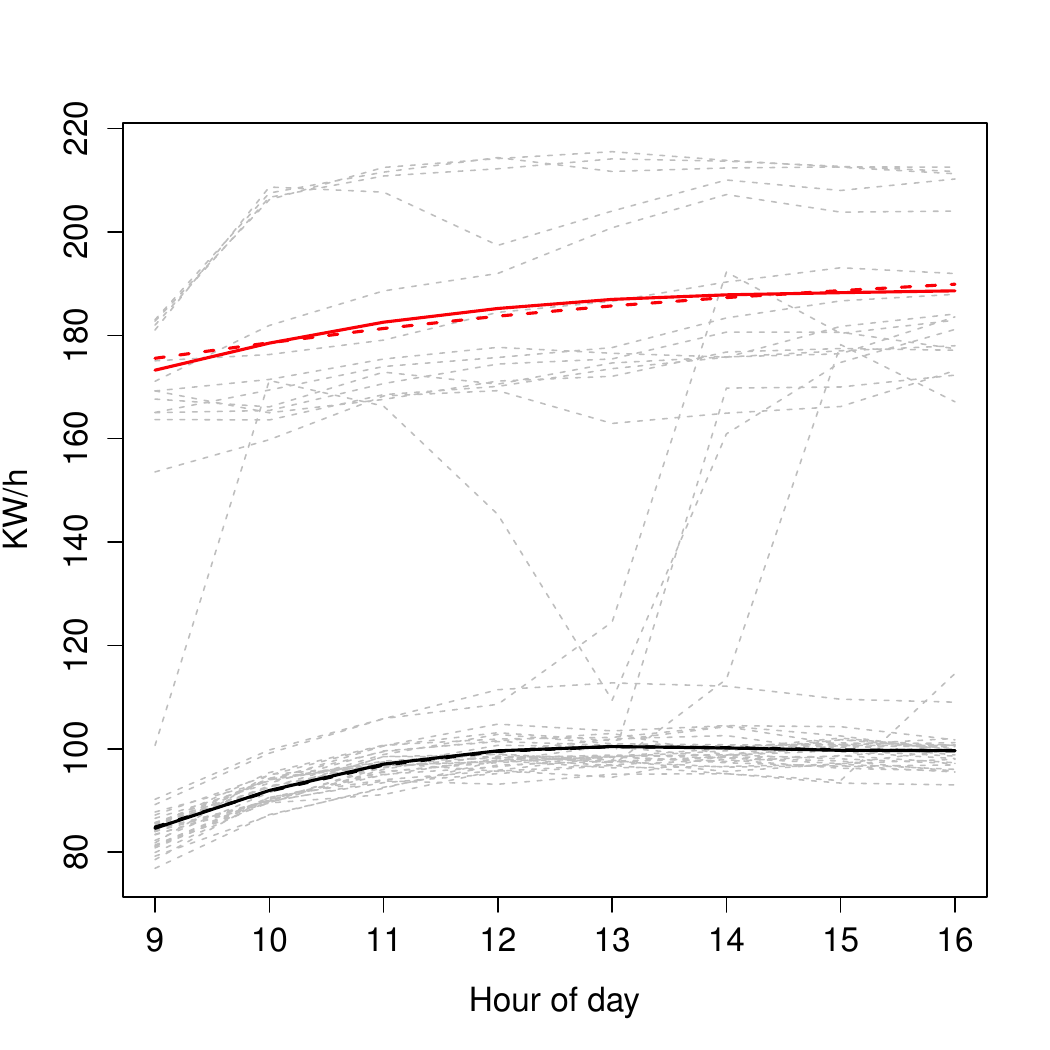}
                \caption{Markov $z_{ik}$s, $\matr{V}=\sigma^2\matr{I}$ }
              \label{building_HMM_equals2}
        \end{subfigure}%
        \begin{subfigure}[b]{0.5\textwidth}
                \centering
                \includegraphics[width=\textwidth]{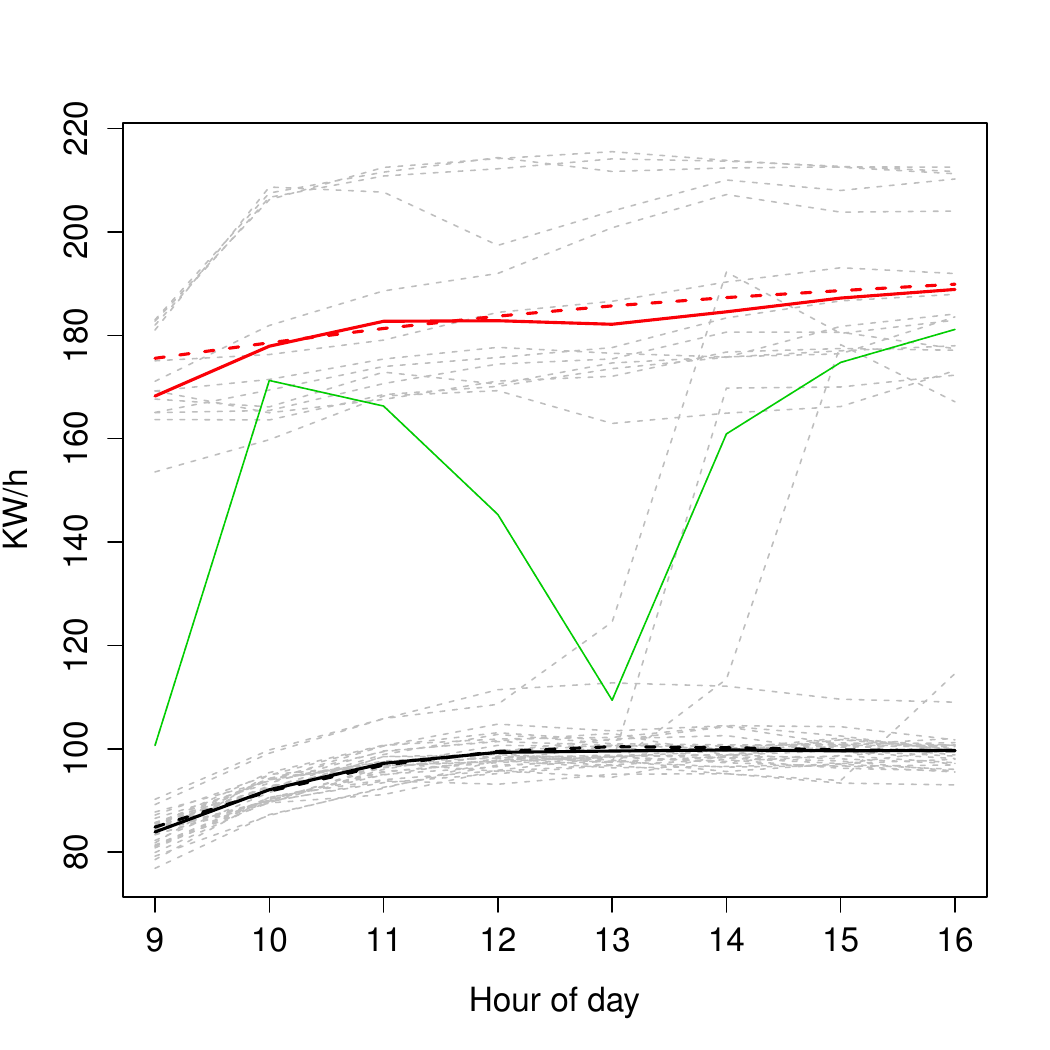}
                \caption{\footnotesize{Markov $z_{ik}$s, $\matr{V}_{\vect{z}_k}=\mbox{diag}(\sigma^2_{z_{1k}},\ldots,\sigma^2_{z_{nk}})$}}
             \label{building_HMM_diffs2}
        \end{subfigure}
\caption{Building daytime power usage. Fitted function estimates (solid curves) assuming \textit{iid} $z_{ik}$s (\textit{top row}) and Markov $z_{ik}$s (\textit{bottom row}). In (a) and (c) we consider $\matr{V}=\sigma^2\matr{I}$. In (b) and (d) $\matr{V}_{\vect{z}_k}=\mbox{diag}(\sigma^2_{z_{1k}},\ldots,\sigma^2_{z_{nk}})$.  The gray dashed curves correspond to the replicates. The red and black dashed curves are the initial function estimates. The colors red and black correspond to the condition chiller on and off, respectively. The green curve in (d) corresponds to the replicate where there is a transition from chiller on to off.}
\label{fig:building_not_correlated}
\end{figure}

\begin{figure}
        \centering
        \begin{subfigure}[b]{0.5\textwidth}
                \centering
                \includegraphics[width=\textwidth]{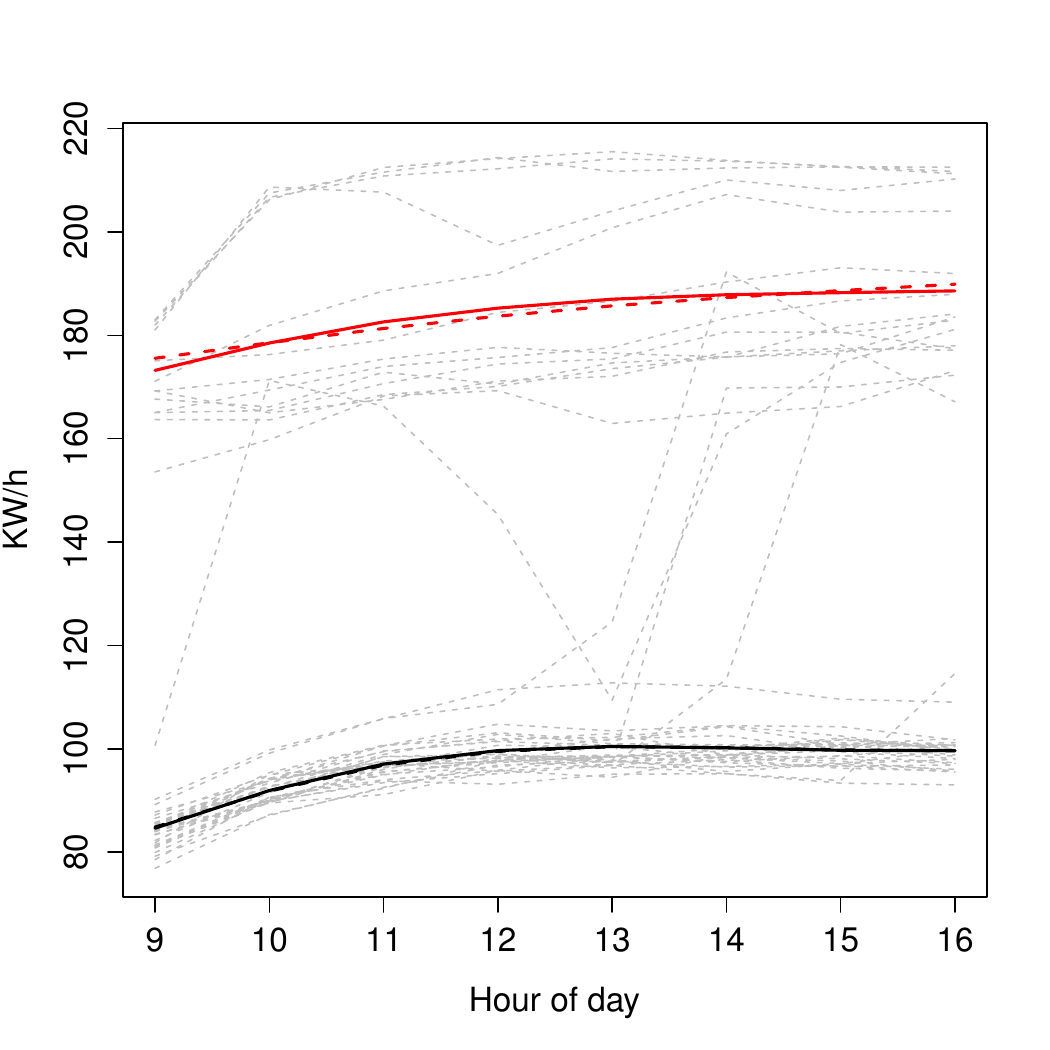}
                \caption{  $z_{ik}$s depending on temperature}
              \label{building_temperature}
        \end{subfigure}%
        \begin{subfigure}[b]{0.5\textwidth}
                \centering
                \includegraphics[width=\textwidth]{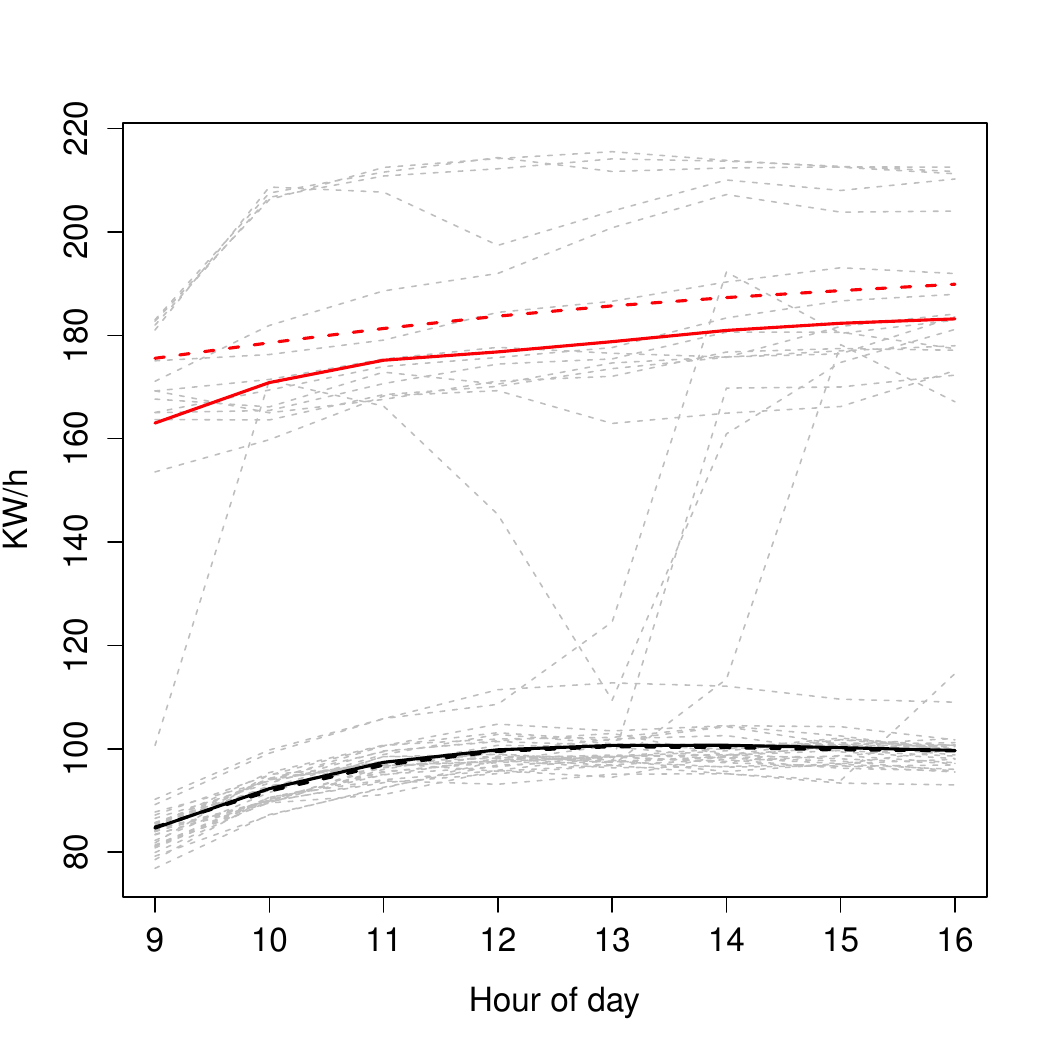}
                \caption{\textit{iid} $z_{ik}$s and $\matr{V}$ as in (\ref{covEq:RI:diff})}
             \label{building_RI_noHMM_complex}
        \end{subfigure}
\caption{Building daytime power usage. (a) Fitted function estimates assuming the $z_{ik}$s are independent with distribution depending on temperature. $\matr{V}_{\vect{z}_k}=\mbox{diag}(\sigma^2_{z_{1k}},\ldots,\sigma^2_{z_{nk}})$. (b) Fitted function estimates assuming \textit{iid} $z_{ik}$s and $\matr{V}$ generated by a non-homogeneous random intercept model as in (\ref{covEq:RI:diff}). Components of the plots are as in Figure \ref{building_noHMM_equals2}.}
\label{building_temp_and_RIcomplex}
\end{figure}

\end{document}